\documentclass[11pt,twoside,letter]{article}
\usepackage{comment}
\usepackage{amsthm, amsmath, amssymb, amsfonts}
\usepackage{mathtools}

\usepackage{bbm}
\usepackage{mathrsfs}
\usepackage{cancel}

\usepackage{graphicx}

\usepackage{epstopdf}

\usepackage{wrapfig}

\usepackage[shortlabels]{enumitem}

\usepackage[usenames,dvipsnames]{color}

\usepackage[colorlinks=true, pdfstartview=FitV, linkcolor=red, citecolor=blue, urlcolor=blue]{hyperref}
\usepackage[usenames]{color}

\usepackage{url}
\makeatletter\def\url@leostyle{%
 \@ifundefined{selectfont}{\def\UrlFont{\sf}}{\def\UrlFont{\scriptsize\ttfamily}}} \makeatother

\usepackage[margin=1.0in, letterpaper]{geometry}

\newtheorem{theorem}{Theorem}

\newtheorem{lemma}[theorem]{Lemma}

\theoremstyle{definition}
\newtheorem{definition}[theorem]{Definition}

\theoremstyle{remark}
\newtheorem{remark}[theorem]{Remark}

\numberwithin{equation}{section}
\numberwithin{theorem}{section}

\definecolor{Red}{rgb}{0.9,0,0.0}
\definecolor{Blue}{rgb}{0,0.0,1.0}

\def\rq{\mathtt{q}}
\def\cA{\mathcal{A}}

\def\cC{\mathcal{C}}

\def\cG{\mathcal{G}}

\def\cN{\mathcal{N}}

\def\cQ{\mathcal{Q}}
\def\cR{\mathcal{R}}
\def\cS{\mathcal{S}}
\def\cT{\mathcal{T}}
\def\cU{\mathcal{U}}
\def\cV{\mathcal{V}}
\def\cW{\mathcal{W}}
\def\cX{\mathcal{X}}

\def\bE{\mathbb{E}}
\def\bF{\mathbb{F}}

\def\bP{\mathbb{P}}

\def\bR{\mathbb{R}}

\def\bT{\mathbb{T}}

\def\sF{\mathscr{F}}

\def\sS{\mathscr{S}}

\def\bfC{\mathbf{C}}

\def\bfQ{\mathbf{Q}}
\def\bfR{\mathbf{R}}

\def\bfW{\mathbf{W}}

\def\bfb{\mathbf{b}}

\def\bfr{\mathbf{r}}

\def\bfw{\mathbf{w}}

\def\bfpi{\boldsymbol{\pi}}

\newcommand{\set}[1]{\{#1\}}            
\newcommand{\Set}[1]{\left\{#1\right\}} 
\renewcommand{\mid}{\;|\;}              

\DeclareMathOperator*{\argmin}{arg\,min} 
\DeclareMathOperator*{\argmax}{arg\,max} 

\title{Multiperiod Groundwater Markets}

\author{ 
	Igor Cialenco\,\thanks{Department of Applied Mathematics, Illinois Institute of Technology
		\newline \hspace*{1.45em}  10 W 32nd Str, Building RE, Room 220, Chicago, IL 60616, USA
		\newline \hspace*{1.45em}  Email: \url{cialenco@iit.edu},  URL: \url{http://cialenco.com}
        \newline \hspace*{1.45em} ORCID: \url{https://orcid.org/0000-0002-1825-0097}
		\vspace{0.5em}} 
	\and and \qquad  
	 Michael Ludkovski\,\thanks{Department of Statistics and Applied Probability, University of California Santa Barbara
		\newline \hspace*{1.45em}  South Hall, Santa Barbara, CA 93106-3110, USA
		\newline \hspace*{1.45em} Email: \url{ludkovski@pstat.ucsb.edu}, URL: \url{http://ludkovski.pstat.ucsb.edu/}, 
        \newline \hspace*{1.45em} ORCID: \url{https://orcid.org/0000-0001-7887-3870}
		\vspace{0.5em}}
}

\date{ {\small  Preliminary Draft: \today}} %

\begin{document}
\date{ {\small  First circulated: May 24, 2026 \\ This version: June 4, 2026}}  

	\maketitle

	\vspace{-1em}
	
	
	{\footnotesize
		\begin{tabular}{l@{} p{350pt}}
			\hline \\[-.2em]
			\textsc{Abstract}: \ & 			
            Motivated by the emergence of local groundwater exchanges, we construct and analyze stochastic models of dynamic groundwater markets. Our primary focus is endogenizing the price formation and groundwater pumping strategies in a closed market with stochastic groundwater allocations and opportunities for intertemporal transfer through rights banking. In our model, several agents, interpreted as farmers or agricultural districts, make competitive decisions on water consumption to produce a basket of goods, as well as on trading allocations among themselves, or banking them for future periods.  We define the respective discrete-time non-zero-sum non-cooperative game and construct its sub-game perfect Nash equilibria characterized by the groundwater price process $\{p^\circ(t)\}$. We furthermore construct an algorithm to determine equilibrium strategies and prices through a machine learning approach on top of best-response iterations.  Extensive numerical experiments illustrate dynamic phenomena, including the role of groundwater recharge dynamics, agents' risk aversion and groundwater allocations. Our model provides insights into competitive effects in environmental markets with banking features.	\\[0.5em]
			\textsc{Keywords:} \ &  dynamic groundwater market, groundwater price, subgame perfect Nash equilibrium, water rights, water banking, Pareto optimality.   \\[0.5em]
			\textsc{MSC2020:} \ &  91B76, 91A10, 91B72, 91B70, 91G60  \\[0.5em]
			\textsc{JEL:} \ &  C73, Q25,  D47, C63, Q58   \\[1em]

            \hline
		\end{tabular}
		
	}



\section{Introduction}

In this paper we develop a model for dynamic groundwater markets. Our work is motivated by the emerging California groundwater landscape, where the SGMA legislation \cite{sgma2014,url:SGMA} mandates the establishment of local Groundwater Sustainability Agencies (GSA) for each groundwater basin. The GSAs must create pumping regulations, which in practice require substantial reduction in pumping volume and much stricter usage. To this end, GSAs introduce groundwater allocations to their stakeholders; for economic efficiency these allocations are often tradeable. The resulting groundwater markets have been estimated to yield substantial gains from trade by moving groundwater consumption from less efficient to most (economically) productive users \cite{ayres2021environmental,bruno2020gains}. As the next step, in order to manage the substantial and intrinsic \emph{temporal} variability of groundwater supplies, there is an emerging interest in multi-period allocation management. Specifically, groundwater banking, which allows stakeholders to transfer their water rights from one period to the next, is one such tool. 

To understand the behavior of groundwater markets with banking, we build a stochastic model that captures the strategic interactions of the respective stakeholders. The essential ingredients of our setup are as follows. The agents are noncooperative and aim to optimize their risk-adjusted future expected profit over the time horizon. To do so, they have three control variables at their disposal at each period. First, they decide how to consume water, by optimally allocating it across their productive activities (think crops). Second, they may trade their pumping rights, buying or selling them from other agents. Third, they may bank some rights, transferring them to the next period. The market operates in a Pareto framework, with the market clearing price determined from supply/demand equilibrium for groundwater buyers and sellers. For that part, we postulate a strict market clearing constraint which under our assumptions yields a unique Pareto price $p^\circ$. This $p^\circ$ is the source of the coupling between the agents, driving their trading and consumption decisions; it emerges endogenously from market equilibrium at the present period. The stochasticity is injected from the random evolution of groundwater supply, captured by the water table height process $H$. Groundwater increases stochastically from precipitation and other recharge sources, captured by the stochastic state $R(t)$, and decreases  based on the aggregate water consumption by the agents. A final aspect that drives market behavior is the allocation mechanism that determines how pumping rights (in addition to those banked in the previous year) are assigned to each agent at each period. 

Qualitatively, our market model therefore features a stochastic supply and an endogenously determined (via a game equilibrium) demand, where agents optimize both within-period (how much to consume and how much to trade) and across periods (how much to bank). These features, together with the coupling through the endogenous price process $p^\circ(t)$ is what makes our setup specific to groundwater. Nevertheless, one can consider other relevant contexts where noncooperating agents are assigned allocations of a scarce resource and interact through the resulting market-clearing price, such as fisheries or forestry.

Let us summarize our contributions. 
As a first key result, in Section \ref{sec:sub-game-perfect} we establish that the described market is indeed a sub-game perfect Markov Nash Equilibrium, which allows us to apply the dynamic programming principle (DPP) to decouple the overall problem into a series of (recursive) one-period equilibria indexed by $t$ and viewed as a function of the market state. This result is essential to achieve modeling and computational tractability and opens the door for quantitative analysis of such markets. Moreover, Theorems \ref{th:main} and \ref{th:ne-market-clearing} show that the obtained NE consists of pure strategies and clears the market.

Section \ref{sec:algorithm} then develops a numerical algorithm that iteratively computes equilibrium strategies and game values through backward recursion. This entails a two-layer procedure: in the inner layer given the water allocations (conditional on banking decisions), we determine the resulting $p^\circ_t$, the trading amounts and the individual consumption/production schedules. The latter actions are decoupled conditional on $p^\circ_t$ and obtained in parallel. In the outer layer, we determine the banking decisions which requires finding a Nash equilibrium of a non-zero continuous-state continuous-action static game. To operationalize the above, we combine the methods of \cite{CialencoLudkovski2025} for the 1-period model with a machine learning approach that employs best-response iterations to find the banking equilibrium at a collection of training inputs and then stitches them together via a spline. 

In the third part of the paper, we investigate the emergent behavior and the comparative statics of the developed dynamic market. From a market design perspective, we investigate in Section~\ref{sec:statics} the value of banking, the endogenized price dynamics and the role of allocation mechanisms. From the environmental uncertainty angle, we furthermore study how the variability and the distribution of recharge flows affect agents decisions and especially banking actions. We also consider the impact of risk-aversion and stakeholder flexibility.  In line with our concrete and real-life inspired setup, Section \ref{sec:calilbration} furthermore presents a case study reflective of a potential groundwater market in a typical contemporary California GSA. The case study focuses on the interaction between forage and orchard farmers and includes calibrated groundwater dynamics and profit functions. 

\subsection{Existing Literature on Multiperiod Groundwater Markets}

Many of the most productive agricultural regions cannot sustain rain-fed harvests, making groundwater essential. For instance, 96 percent of harvested cropland in California is irrigated \cite{SearsEtAl2019} and during droughts, when surface water supplies are greatly reduced, groundwater constitutes close to two-thirds of the water used in the state \cite{sears2022moment}. To avoid the externality of unfettered access to the shared, hydraulically connected groundwater resource, groundwater allocations are a cornerstone of management. 

Groundwater trading acts as a flexible, market-based tool in spatial water management to reallocate water to higher-value uses, reducing the economic impact of scarcity and regulatory restrictions. In California, trading programs assist GSAs in meeting sustainability goals by providing a mechanism for cap-and-trade, mitigating rigid groundwater allocations.  Compared to top-down regulatory approaches, the nature of trading programs provides groundwater users with more choices and flexibility on a voluntary basis to meet groundwater management objectives. Further, the ability to monetize groundwater in a trading program can reward individuals for conservation and improved management practices \cite{babbitt2017groundwater}. 

A natural complement to groundwater trading is the mechanism of groundwater allocation \emph{banking}, also known as Carryover Rights, In-Lieu Recharge, Soft Caps, and Intertemporal Trading. Banking allows a rights holder to defer the use of an assigned water quota in one period (by forgoing pumping) and access it in a future period. Such carryovers have been implemented in Australia \cite{hughes2009management}, Kansas \cite{peck2006groundwater} (where all pumping rights are for a period of 5 years rather than a single year) and several of California's GSAs \cite{HanakEtAl2023}. 

Beyond the primary reason of added economic efficiency \cite{young2021hydrologic,AyresEtAl2021,arnold2021economic} through additional flexibility that optimizes economic yields, banking eliminates use-it-or-lose-it behavior where agents will pump their maximum entitlement to preserve their history of use. Carryover removes this perverse incentive, promoting conservation during wet/average years. Similarly, banking maximizes the economic value per acre-foot by aligning consumption with marginal value of pumped water, so that consumption is lowered during wet years when groundwater value is low and increased during droughts when it is at a premium. Thus, groundwater carryover acts as an economic stabilizer by allowing unused water allocations to be stored for future use, smoothing agricultural revenue and surface water supply volatility.  
Unlike expensive surface reservoirs,  forgoing pumping carries zero capital expenditures, monetizing the free storage capabilities of the aquifer.
 From a macroeconomic angle, multi-year allocations and carryover arrangements help farmers manage risk and maintain steady profitability. Moreover, banking  supports the sustainability of permanent crops that require consistent water access, preventing long-term capital losses.  Environmentally, groundwater banking actions  provide broad local and regional benefits, including improved water supply reliability, better drought resilience, support for river and ecosystem health, reduced water pollution, and assistance with flood control. 

At the same time, banking also creates new challenges \cite{maliva2014groundwater}. An ongoing policy knot is creating strong legal systems that track banked water, which for instance requires distinguishing between appropriative and overlying owners. Concerted ``cashing in'' of carryover rights during a drought may create a ``run on the bank'', leading to a catastrophic drawdown. Indeed, carryovers lead to more lumpy pumping; concentrating the pumping volume across time may crash hydraulic heads, dewater shallow wells, and decouple stream-aquifer interactions. Moreover, the physics of ``water in water out'' are complex. To match hydrogeologic reality, banked rights ought to decay due to natural outflows and subsurface aquifer transients.  According to Darcy's law, groundwater migrates down-gradient, flowing laterally towards the basin outlet or surface river. As such, carryover credits often  require a Leave-behind or ``haircut'' deduction to account for this migration \cite{hanak2011managing}. Similarly,  phreatophytic evapotranspiration implies that ``saved'' water in areas with shallow water table may be consumed by deep-rooted vegetation (phreatophytes) or lost to soil evaporation. Finally, hydrological conditions may create spatial interactions \cite{palazzo2014role}; for instance if neighboring pumpers lower the pressure head, clay layers may compress, so that banked groundwater is forced out to the surface as  the physical space to hold it shrinks due to compaction. As shown by \cite{palazzo2014role}, carryover rights naturally shift pumping toward the most sensitive areas of the aquifer, i.e., those with highest hydraulic conductivity, accelerating depletion.

To deal with the above issues,  regulators usually implement specific guardrails to prevent basin failure and remedy externalities imposed on third parties, such as compromised water quality and ecosystem degradation \cite{bruno2024designing}.  These include  Rollover Caps which limit how much can be banked to prevent hoarding \cite{peck2006groundwater}; Spillable Accounts if the aquifer ever fills up to the brim \cite{hughes2009management,hughes2023economic}; Cut to the Aquifer that taxes banked water \cite{hanak2011managing}, and Spatial Zoning whereby carryover credits cannot be transferred into ``red zones''. Ultimately, there is always a trade-off. For example, \cite{young2021hydrologic} show that soft cap allocation policies that allow carryover outperform equivalent hard caps in terms of greater farm profitability and reduced production risks but come at the cost of higher peak and cumulative rates of streamflow depletion.

We refer to \cite{newman2018western,bruno2021missing,bruno2024designing} for further discussion of groundwater market design and related political economies needed to achieve allocative efficiency.  For example, since most trades are bilateral, there is a strong informational asymmetry and \cite{bruno2020gains} address market power effects, such as the potential formation of water ``cartels''. \cite{bruno2024designing} urge to study how groundwater markets perform in response to a range of climate-induced water supply disruptions and  the potential of markets to reduce costs from a changing water landscape. \cite{maples2018leveraging} advocate for interdisciplinary approach to groundwater management, calling for better quantification and data clearinghouses (``you cannot manage what you do not measure'') and clearer legal frameworks.

Our setup is closest to the model in \cite{SearsEtAl2019,SearsEtAl2020} who similarly consider a dynamic game among multiple stakeholders. \cite{SearsEtAl2019} consider multiple patches of land, where agents interact due to lateral groundflows driven by table height differentials (Darcy's Law). Their state variable is the table height or stock $S$ which linearly impacts extraction costs; farmers choose the quantity $C$ of water to apply for irrigation, which enters their revenues according to a concave power function and their costs linearly. Table height increases through exogenous precipitation-driven recharge. The authors then numerically solve for the Markov Perfect Nash Equilibrium (MPNE) by iterating the best-response maps. 
Unlike our model, \cite{SearsEtAl2019}  does not allow any trading and restricts attention to a highly discretized action space (e.g.~consisting of just 4 pumping levels). \cite{SearsEtAl2020} drop the spatial aspect and instead introduce the interaction between overlying farmers and appropriators who have a linear revenue from pumped water but a limited allocation of rights (again, no trading is allowed). In a follow-up working paper \cite{sears2022moment} present an alternative approach to determine the MPNE. \cite{hughes2023economic} builds a similar dynamic program for a \emph{surface} water market in South Australia, documenting the high value of banking and its importance in reducing price and water supply volatility. Unlike our analysis, \cite{hughes2023economic} use a stylized equilibrium characterization and moreover postulate a parametric (linear-exponential) form of the value functions, rather than undertake direct optimization. The earlier \cite{hughes2009management} considered shared storage of surface water in centralized dams in Australia. 

On a broader plane, a large body of literature studies dynamic groundwater management, typically contrasting the perspective of a single planner with a competitive outcome. This strand does not incorporate markets, instead studying a range of externalities that are present for example due to common access aquifer property.  Intertemporal management naturally raises the consumption-storage decision connecting to our banking equilibrium. The common pool aspect creates two externalities: pumping externality means that withdrawals by one farmer lower the water table, increasing pumping costs of everyone else; strategic externality means that water saved by one farmer today, can be exploited by everyone else tomorrow.  For example, \cite{merrill2018optimal} consider the context of aquifer depletion which forces farmers to gradually switch to dryland farming, calibrating to the Ogalalla aquifer in Kansas. In their model, stochastic surface irrigation is likewise captured by a Markov chain and they solve the DPP to compare deadweight loss between a central planner and a myopic individual stakeholder. 

Complementary to our framing are large-scale hydro-economic optimization models such as the CALifornia Value Integrated Network
(CALVIN) \cite{pulido2004economic,khadem2018estimating}. The latter focus on top-down centralized management of surface reservoirs and groundwater basins, aiming to directly capture the ``carryover storage value'' of keeping water in reserve to mitigate drought impacts. Such models involve many dozens of decision variables (e.g., 84 decision variables in CALVIN for 26 surface reservoirs and 32 groundwater basins) and capture the multi-reservoir dependencies, so usually operate under perfect foresight assumptions over long time horizons (82 years in CALVIN). In comparison, we tackle non-cooperative optimization by each agent and the banking of exogenously allocated water rights, rather than direct consideration of stored water value. 

\begin{remark}
The above overview focuses on the groundwater context. Mathematically, our setup of a discrete-time finite-player stochastic game resembles recent works \cite{hofer2026markov} and \cite{welsh2026multi} and overlaps with approaches to games with a common resource \cite{dayanikli2025cooperation} and major-minor setups \cite{cui2024learning}. We also mention literature on price formation in mean field setups with random supply \cite{gomes2023random}.
\end{remark}

\section{Model and Dynamic Equilibrium}

We begin by introducing a general and flexible modeling framework that captures the key features of the current groundwater market and is designed to serve as a foundation for subsequent analyses and related studies. Our setup targets rural basins where the primary stakeholders are farmers and agricultural water districts, and the primary use of groundwater is for irrigation, i.e.~to grow agricultural products. Below we abstract from such details and use the generic economic terminology of agents and profit-loss functions. Throughout, we work in discrete-time to reflect the seasonality of agriculture where irrigation-driven decisions occur on a yearly basis.  The next section refines the model by discussing each component in turn.

On a filtered probability space $(\Omega, \sF, \bP, \bF= (\sF_t)_{t=0}^T)$, assume that $J$ economic agents trade within one basin groundwater rights among themselves, on a discrete time grid $t=0,1,\ldots, T$. We assume that the horizon $T$ is finite. The filtration $\bF$ is generated by an underlying set of stochastic factors, such as groundwater recharge, weather, and market fluctuations, whose specification is left to the modeler. Throughout, all processes are assumed to be $\bF$-adapted, with the dependence on $\omega\in\Omega$ usually suppressed. We use boldface to denote vectors, e.g.~$\boldsymbol{\varphi}=(\varphi_1,\ldots,\varphi_J)$. Fraktur letters  denote  sums of quantities across agents, e.g. $\mathfrak{W}=\sum_{j=1}^JW_j$.  

Central to our modeling framework is the \textit{groundwater recharge process} $\{R(t)\}$, where $R(t)$ is the amount of water (primarily from surface precipitation percolating down) entering the aquifer from time $t-1$ to $t$. We assume that the process $R(t)$  is given exogenously, and follows a Markovian structure. Each agent gets an allocation of groundwater rights $R_j(t)$ at the beginning of the period, which could be a fixed or random amount. In particular, it may depend directly or indirectly on the recharge process $R$. We assume that $R_j, \ j=1,\ldots, J$, are also Markov. See Section~\ref{sec:recharge-allocation} for more structural details about $R$ and $R_j$.

Each agent $j$  produces goods according to a profit-and-loss function $G_j:\Omega\times\cT\times[0,\infty)\to\bR$, which maps \textit{water use} $C_j\geq 0$, in acre-feet (ac-ft) to profit (in dollars) $G_j(t,C_j)$. We assume that for each $t$, the function $C_j \mapsto G_j(t,C_j)$ is increasing and concave on $\mathcal C_j(t)$. In particular, higher water use leads to higher profit. As a concave function on a compact interval, $G_j$ is continuous in $C_j$.

Agents also can trade water among themselves at the beginning of each time period $t$.
Groundwater trading is interpreted as individual groundwater users transferring their allocations, with no physically conveyance of groundwater between them. These trades are ``annual leases'' and only apply for the given period. 
Denote by $\psi_j(t)$ the amount of \textit{water traded}, in ac-ft,  by agent $j$, with  convention that  $\psi_j>0$ means selling water, and $\psi_j<0$ buying water.  Let $p(t)$ be the price  of traded water (in \$/ac-ft) at time  $t$. Both $\psi_j, p$ could be random. We follow the convention that the last trade happens at $t=T-1$,  and all remaining water is lost, with $\psi_j(T)=C_j(T)=0$. Then, the $j$-th agent P\&L is given by 
\begin{align}\label{eq:L}
L_j(t) := G_j(t,C_j(t)) + \psi_j(t) \cdot p(t), \quad t=0,1,\ldots, T. 
\end{align}
The traded water amounts must balance out via the \textit{market clearing condition} 
\begin{equation}\label{eq:mrktClearPsi}
\sum_{j=1}^J \psi_j(t) = 0. 
\end{equation}
The market-clearing condition is binding and imposed as a hard constraint, though it could be relaxed by allowing participants to opt out of trading.

Let $W_j(t)$ be the total amount of water available to the agent $j$ at the beginning of the time period $t$, that includes the water rights given by the regulator to agent $j$ at time $t$, as well as the unused water from the previous periods, but excludes the  water $C_j(t)$ needed for production of goods  or the amount of traded water $\psi_j(t)$. Hence, the random process $\{W_j(t)\}$ follows the dynamics  
\begin{equation}\label{eq:dynWj}
W_j(t+1) = W_j(t) + R_j(t+1) - C_j(t) - \psi_j(t), \quad t=0,\ldots, T-1, 
\end{equation}
with $W_j(0) = w_j$. Thus, pumping capacity  increases through new rights tied to the exogenous process $R(\cdot)$ and decreases due to consumption and trading. 
We impose the water budget constraints
\begin{equation}
    -\sum_{i\neq j} W_i(t)  \leq C_j(t) + \psi_j(t) \leq  W_j(t). \label{eq:waterBudgetConstr}
\end{equation}
The agents have the opportunity for \textit{groundwater banking}, that is, $C_j(t)+\psi_j(t)< W_j(t)$ yielding intertemporal shift in water consumption. To keep track of banking amounts, we let 
\begin{align}\label{eq:bj}
b_j(t) := W_j(t) - C_j(t) - \psi_j(t),
\end{align}
such that 
\begin{equation}\label{eq:bj-bounds}
0\leq b_j(t)\leq \mathfrak{W}(t) ,
\end{equation}
which represents how much water agent~$j$ banks in period $t$ and hence is carried over to the next period.\footnote{We note that the constraints on water banking \eqref{eq:waterBudgetConstr} can be strengthened  to $0\leq b_j(t)\leq \mathfrak{W}(t) - \sum_{i}\underline{c}_i(t)$.} Banking reduces available water today and increases groundwater available next period, reducing \eqref{eq:dynWj} to 
\begin{equation}\label{eq:dynWj-b}
W_j(t+1) = b_j(t) + R_j(t+1)  \quad t=0,\ldots, T-1.
\end{equation}
\noindent The assumption that $b_j(t)$ is non-negative is essential and implies that agents cannot borrow from their future allocations, but must first build up pumping reserves.

Each agent controls dynamically her production of goods and water rights she trades, which, in view of \eqref{eq:bj}, amounts to specifying two of the three quantities $C_j(t)$, $\psi_j(t)$, and $b_j(t)$, given $W_j(t)$. With a slight abuse of notation, we will refer to a strategy $\pi_j$ as a pair consisting of two of these quantities.  Denote by $\sS_j$ the set of all feasible controls $\pi_j$ of agent $j$, i.e.~stochastic processes satisfying constraints \eqref{eq:mrktClearPsi} and \eqref{eq:waterBudgetConstr}.  We also assume that $C_j(t) \in \cC_j(t) := [\underline{c}_j,  \overline{c}_j]$, for some fixed constants $0\leq \underline{c}_j < \overline{c}_j < \infty$. Such bounds on water consumption may arise either from physical limitations—reflecting the finiteness and bounded capacity of the aquifer—or from constraints imposed by production requirements. Additional constraints may be added, such as minimum and/or maximum water consumed or traded by each agent. Each agent maximizes a risk-reward functional $U_j(\cdot)$ of her revenue, discounted by the temporal factor $\delta$.  The functions $U_j:\bR\to\bR$  account for each agent's idiosyncratic utility, and are required to be monotone increasing and concave.

Overall, at time $t$, agent $j=1,\ldots,J$ selects a control $\pi_j(t) = \big(C_j(t), \psi_j(t)\big)$ from the feasible action set 
$$
\cA_j(\bfW):= \mathcal{C}_j(t) \times [-\sum_{i\neq j} W_j(t), {W}_j(t)],
$$ 
which is a compact in $\bR^{2}$, for every allocation $\bfW$.  More generally, agents may employ \textit{mixed} (or randomized) strategies, by choosing a probability distribution over the action set $\cA_j$.

Additionally, we  introduce a fictitious player, called the \emph{price-setter}, whose payoff at time $t$ is 
$L_{J+1}(\bfpi, p) : = p(s)  \big(\sum_{j=1}^J  \psi_j(t))\big)^2$, 
and who is choosing only $p$.
Without loss of generality, we assume that $p\in[\underline{p}, \bar p]$, for some fixed $0< \underline{p}<\overline{p} <\infty$. Indeed, under some natural growth condition on $G_j$, as proved below, for prices larger than a threshold $\bar p$, all agents will prefer to sell their water rights, hence nobody will buy, yielding the critical no-trade situation $\forall j,\ \psi_j=0$. Similarly, for small positive prices, all agents will prefer to buy. We denote by $\sS_{J+1}$ the set of adapted processes $p$ with values in $[\underline{p},\bar{p}]$, which is also the set of admissible strategy for agent $J+1$. Moreover, by analogy a mixed strategy for agent $J+1$ is a probability distribution on $[\underline{p},\bar{p}]$.

We consider the stochastic process $(\bfW(t), \bfR(t))$, $t=0,\ldots, T$, with state space $\cS=([0, \bar{h}]\times[0,\overline{r}])^{J+1}$, for some $\bar{h}, \overline{r}>0$. The upper bound $\bar{h}$ could be viewed as the largest amount of water an agent can save, which is a finite number  given the discrete time setup and assuming that the recharge process $R$ is bounded by $\overline{r}$.  A (mixed) strategy $\bfpi$ generates a probability measure $\bP^\pi$ on canonical space, with transition probabilities 
\begin{equation}\label{eq:trProb}
p((w',r') | (w,r), (c,b)) = \bP(\bfW_{t+1} = w', \bfR_{t+1} = r' \mid \bfW_t = w, \bfR_t = r, \bfC_t=c, \bfb_t = b ). 
\end{equation}
Given the feasible  strategies  $\pi_{-j}$ chosen by agents other than $j$, and given the price process $p$, the agent $j$  
maximizes her expected discounted payoff 
\[
A_j(\pi_j, \pi_{-j},p) = \bE^{\pi} \Big[\sum_{s=0}^T e^{-\delta s} U_j(L_j(s; \pi_j(s),p(s)))\Big]. 
\]
Without loss of generality, we assume all agents use the same discount rate $\delta$. 
Correspondingly, the fictitious player's  expected payoff is 
$$
A_{J+1}(p, \bfpi) := \bE^\pi \Big[\sum_{s=0}^T L_{J+1}(\bfpi(s),p(s))\Big]. 
$$

Agents play a stochastic dynamic Nash game \cite{BasarOlsder1999} without a central planner. 
The fair water rights price process $p^*(\cdot)$ is determined from a Nash equilibrium  point 
\[
(\pi^*, p^*) = ( (C_j^*(t), \psi_j^*(t)), (p^*(t)), \quad j=1,\ldots, J, \ t=0,\ldots,T, 
\]
such that no agent can gain by deviating, assuming the other agents keep their strategies unchanged.

\begin{definition}
The pair $(\bfpi^*, p^*)$  is called a Nash equilibrium (NE) if 
\begin{equation}\label{eq:Nash2}
\begin{split}
 \forall j=1,\ldots, J, \quad \forall \pi_j' \in \sS_j, \quad
&A_j(\pi_j^*, \bfpi_{-j}^*,p^*) \geq A_j(\pi_j', \bfpi_{-j}^*,p^*), \\
\forall p' \in \sS_{J+1}, \quad & A_{J+1}(p^*, \bfpi^*) \leq A_{J+1}(p', \bfpi^*).
\end{split}
\end{equation}
\end{definition}

From the general theory of stochastic games \cite{BasarZaccour2018}, one can show that there exists a NE in the class of mixed strategies. Usually, this equilibrium is not unique. In particular, as shown in \cite{CialencoLudkovski2025}, even for one period setup, for every price $p>0$ there exists  a (pure) NE.  From both a computational and an economic perspective, it is desirable to refine the problem and develop a method that yields a unique, or at least clearly computable, Nash equilibrium. We address this issue in the following section.

Next, we further refine the model considering particular cases that reflect various groundwater market designs.

\subsection{The recharge process and allocations}\label{sec:recharge-allocation}
Traditionally, the recharge process $R(t)$ is modeled using a percolation process calculated as the portion of precipitation that remains after accounting for losses such as evapotranspiration, runoff, and changes in soil moisture storage. Moreover, one accounts for lateral movement or hydroconductivity that captures how water diffuses in the aquifer, e.g. the appearance of cones of depression around pumping wells \cite{pfeiffer2012groundwater}. The above quantities are spatially heterogeneous and driven by geological rock structures.
We rather propose a reduced-form  model, where $R(t)$ is a modeled as a finite state Markov chain, say with three states low, medium, high, corresponding to drought, normal or wet year. This  Markov chain can be calibrated using available historical precipitation data and the water table height; see Section~\ref{sec:calilbration}. We note that the dynamics of the water table height $H(t)$ is given by 
$$
H(t+1) = H(t) + R(t+1) - \mathfrak{C}(t), \quad t=0,\ldots,T-1,  
$$ 
with the initial condition $H(0)$. Similar models for precipitation have appeared in \cite{merrill2018optimal,SearsEtAl2019} who respectively considered 3 states and 5 states for $R(t)$.

One justification for taking such a reduced-form structure for recharge is bounded rationality. 
There is considerable uncertainty about groundwater flows, especially by farmers who have limited  knowledge of hydrology and may be unaware of actual groundwater flows. Hence, a ``behavioral'' simplified allocation model is preferred to a detailed hydrological model \cite{pfeiffer2012groundwater}.

Each agent gets an allocation: $R_j(t) = \Theta_j(R(t))$ for some fixed maps $\Theta_j$ such that $\sum_j\Theta_j(R) = 1$ for all $R$. The motivating example are fixed allocation fractions $\Theta_j(R) = \theta_j R$ for some $\theta_j \in (0,1)$. Another example that we consider below is a senior-junior allocation,  $\Theta_1(R) = r_1 \vee (\theta_1 R)$, $\Theta_2(R) = (\theta_2 R) \wedge (R-r_1)$ that guarantees that the first farmer gets at least $r_1$ no matter what. Generally speaking, in this work, we assume that $\Theta_j$ does not depend on the strategy $\bfpi$ or price $p$.  

Another common interpretation is that the water regime $R$ controls the \emph{distribution} of water allocated. For example, in Section~\ref{sec:calilbration} we use $\log R_j(t) | R(t) \sim \cN(\theta_j \mu_{R(t)}, \sigma^2_{R(t)})$, so that total groundwater recharge is log-normal with parameters that are $R(t)$-dependent, and respective proportional allocations. 

In our setting we equate net water used with gross water pumped, directly linking water consumption $C_j(t)$ with volumetric groundwater pumping allocation $R_j(t)$. In practice, some water is lost due to runoff or evaporation, and some returns to the aquifer via deep percolation. This issue of whether allocations are based on volume pumped or on net consumption is important for measuring and enforcing real-life groundwater usage, but is effectively absorbed into some scaling constants in our framework.

Traditionally, pumping is both unmonitored and uncapped. Consequently, stakeholders are able to apply as much groundwater as they wish. Because pumping costs are linked to water table height, the latter acts as both a state variable, representing the latest ``stock'' of groundwater, as well as a parameter affecting pumping decisions. Aquifer depletion and regulations such as SGMA implement mandatory metering of irrigation wells and introduce volumetric restrictions on pumping. In the present model, we assume that these caps are binding and non-bypassable, and moreover that the aquifer is in a steady-state. Consequently, the table height is constant and the allocations are linked to the ``sustainable yield'' of the aquifer. We refer to \cite{SearsEtAl2019,SearsEtAl2020,merrill2018optimal,kuwayama2013regulation} for models that do not include pumping caps and explicitly capture pumping costs as a function of groundwater stock. 

\subsection{Profit function}\label{sec:profit-func}
First, we focus on the profit function $G_j$ corresponding to agricultural producers. Farmer $j$ has the opportunity to produce $K$ types of goods (e.g.~grow different crops), and can sell good $k$ at time $t$ for a random net profit of $\overline{f}^k_j(t,\varphi_j^k)$ dollars per $\varphi_j^k$ units of good $j$. We assume that the functions $\bar{f}_j^k:\Omega\times\cT\times \cU \to \bR$ are jointly measurable on $\Omega\otimes \cU$, where $\cU\subset [0,\infty)$ is a compact endowed with Borel $\sigma$-algebra. We also assume that $\overline{f}^k_j(t,\varphi_j^k)\varphi_j^k$ are concave in $\varphi_j^k$, for all $t$. The production of one unit of good $k$ requires $a^k(t)$ ac-ft of groundwater and costs $q^k(t)$ dollars per ac-ft to pump. Thus, the profit for farmer $j$ that produces $\boldsymbol{\varphi}_j(t) = (\varphi_j^1(t), \ldots, \varphi_j^K(t))$ units of goods becomes 
\[
F_j(t,\boldsymbol{\varphi}_j(t)) = \sum_{k=1}^K [\overline{f}^k_j(t, \varphi_j^k(t)) - q^k(t)] \varphi^k_j(t) 
\]
with corresponding consumption of $C_j(t) = \sum_{k=1}^K a^k(t)\varphi_j^k(t)$ ac-ft.  Processes $\bar{f}_j^k, a^k, q^k$ can be random, e.g. depend on the recharge process $R$ or water table $H$ capturing that during dry and hot seasons more water is needed for irrigation due to higher evaporation rate, smaller profits, or  it is more expensive to pump water when the water table is low. 
In the numerical study below, we take time-stationary $\bar f_j^k = f_j^k \cdot (\varphi_j^k)^{\alpha_j^k-1}$, with some baseline price coefficient  $f_j^k\geq 0$, and elasticity parameter $\alpha_j^k\in(0,1]$.

A farmer has a fixed and limited land to use, usually with specific parts designated for particular crops, such as trees/vines,	vegetables,	grains, corn, pasture/alfalfa. Farmers choose whether to produce particular goods, leave land fallow, or reallocate land across different goods, although the latter is usually costly and rarely implemented. All these features can be captured by the profit-per-good functions $\bar{f}_j^k$, with some additional natural production bounds
\begin{align}
	0 \leq n^k_j(t) \leq  & \ \varphi^k_j(t) \leq N_j^k(t), \qquad \text{ for all } k,t,  \label{eq:constr_Bounds} 
\end{align}
for some processes $ \mathbf{n}(t), \mathbf{N}(t)$ taken as inputs to the model. Lower bounds may result from contractual production requirements or from crops that cannot be fallowed, such as trees or vines. Upper bounds are determined by physical constraints on available land.

The resulting profit function of agent $j$ is the maximum over the $\varphi^k_j$'s,
\begin{align}\label{eq:G}
G_j(t,C_j):=\max_{\boldsymbol{\varphi}_j} 
\sum_{k=1}^K [\bar{f}^k_j(t, \varphi_j^k) -q^k(t)] \varphi^k_j = \max_{\boldsymbol{\varphi}_j} F_j(t, \boldsymbol{\varphi}_j)
\end{align}
subject to \eqref{eq:constr_Bounds} and $\sum_k a^k(t)\varphi_j^k=C_j$. Note that due to the constraints, the domain of $G_j$ is $C_j \in [\underline{c}_j(t) ,\overline{c}_j(t)]=:\cC_j(t)$, where  $\underline{c}_j(t):=\sum_{k=1}^K n_j^k a^k(t)$ and $\bar{c}_j(t):=\sum_{k=1}^K N_j^k a^k(t)$.

\begin{remark}\label{rem:Gconcave}
In view of the assumption on concavity of $\bar{f}_j^k(t,\varphi_j^k)\varphi_j^k$, the objective $F_j(t, \boldsymbol{\varphi}_j)$ is concave in $\boldsymbol{\varphi}_j$, and since the feasible set of the optimization problem \eqref{eq:G} is convex and affine in $C_j$, the function $G_j(t,C_j)$ is concave in $C_j$. Moreover, if $F_j(t, \boldsymbol{\varphi}_j)$ is strictly concave, then $G_j(C)$ is strictly concave on $[\underline{c}_j,\overline{c}_j]$.
\end{remark}

\begin{remark}
Above we view the minimum consumption needs $\underline{c}_j$ as being a hard constraint. To avoid potential infeasibility, one could take $\bar{f}_j^k(t, \varphi_j^k)$ rapidly decreasing for $\varphi_j^k(t) < n_j^k(t)$, interpreted as a steep penalty for not meeting minimum production. This penalty would represent abandoning a part of the orchard or purchasing (very expensive) surface water elsewhere, or contractual fines with vendors.   
\end{remark}

\subsection{Sub-game perfect equilibria}\label{sec:sub-game-perfect}

Generally speaking the NE is not unique \cite{hofer2026markov} and choosing an economically meaningful  equilibrium is a major task. A canonical approach is to consider sub-game perfect NE, which can be equivalently described by a backward induction procedure, tentatively in a unique way, and usually only in the class of mixed Markov strategies. Sub-game perfection can be viewed as the non-cooperative extension of the dynamic programming paradigm and enforces time-consistency properties.  We refer to comprehensive survey \cite[Chapter 6.3]{BasarZaccour2018} on sub-game perfect NEs. In what follows we will consider \textit{only  Markovian policies} $\bfpi, p$, on the state space $\cS$. Thus, the process $(\boldsymbol{W}(t), \bfR(t))$  is Markov. 

Let us consider the value of the continuation game of the agent $j$,  
\[
A_j(\pi_j, \bfpi_{-j}, p; t, \bfw, \bfr) = \bE^{\bfpi} \Big[\sum_{s=t}^T U_j\big( L_j(s)\big) \ \big| \  \boldsymbol{W}(t)= \bfw, \ \bfR(t)= \bfr \Big], \quad j=1,\ldots,J,
\]
and respectively for the price-setter 
\[
A_{J+1}(\bfpi, p; t, \bfw, \bfr) = \bE^{\bfpi} \Big[\sum_{s=t}^T p(s)  \big(\sum_{j=1}^J \psi_j(s))\big)^2  \ \big| \  \boldsymbol{W}(t) = \bfw, \ \bfR(t) = \bfr \Big]. 
\]

\begin{definition}
    A strategy $(\bfpi^*,p^*)$ is called sub-game perfect NE if for any $t=0,1,\ldots, T$, 
    \[
    A_j(\pi_j^*, \bfpi_{-j}^*,p^*; t, \bfw, \bfr) \geq     A_j(\pi_j', \bfpi_{-j}^*,p^*; t, \bfw, \bfr),  
    \]
    for any $j=1,\ldots,J$, $\pi_j'\in\sS_j$, and any state $\bfw,\bfr$, and 
    \[
    A_{J+1}(\bfpi^*, p^*; t, \bfw, \bfr) \leq     A_{J+1}(\bfpi^*, p'; t, \bfw, \bfr),   
    \]
for any $p'\in\sS_{J+1}$.     
\end{definition}

\begin{theorem}
    There exists a mixed sub-game perfect NE. 
\end{theorem}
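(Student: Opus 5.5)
The plan is backward induction on $t$, solving at each stage a one-shot $(J+1)$-player game parametrized by the state $(\bfw,\bfr)\in\cS$. At $t=T$ there is no trade and $C_j(T)=\psi_j(T)=0$. The terminal payoffs $U_j(G_j(T,0))$ are therefore fixed, and we set $V_j(T,\bfw,\bfr):=U_j(G_j(T,0))$ and $V_{J+1}(T,\cdot):=0$.

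\textbf{Inductive step.} Suppose Markov (mixed) strategies have been built for periods $t+1,\dots,T$, with continuation values $V_i(t+1,\cdot)$ that are bounded and Borel measurable on $\cS$. At time $t$ and state $(\bfw,\bfr)$ we consider the static game in which:
\begin{itemize}[leftmargin=2em]
\item agent $j$ picks $\pi_j=(C_j,\psi_j)\in\cA_j(\bfw)$ and receives
\[
U_j\big(G_j(t,C_j)+\psi_j p\big)+e^{-\delta}\,\bE\big[V_j(t+1,\bfW(t+1),\bfR(t+1))\mid \bfw,\bfr,\bfpi\big];
\]
\item the price-setter picks $p\in[\underline p,\bar p]$ and receives $-\big(p(\sum_j\psi_j)^2+\bE[V_{J+1}(t+1,\cdot)]\big)$, since it minimizes.
\end{itemize}
By \eqref{eq:dynWj-b}, $\bfW(t+1)=\mathbf b(t)+\bfR(t+1)$ with $b_j=w_j-C_j-\psi_j$. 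Each action set is a nonempty compact subset of a Euclidean space, and $G_j(t,\cdot)$ is continuous by concavity on a compact interval. Glicksberg's theorem (continuous payoffs on compact metric action spaces) then yields a mixed NE $(\bfpi^*(t,\bfw,\bfr),p^*(t,\bfw,\bfr))$. We define $V_i(t,\bfw,\bfr)$ as the equilibrium payoffs and proceed to $t-1$.

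\textbf{Verification.} Once the stage equilibria are chosen, the one-shot deviation principle shows the resulting Markov profile is sub-game perfect. This holds because the horizon is finite and payoffs are bounded: $U_j$ is continuous on the bounded range of $L_j$, and $p$ and $\psi_j$ are bounded. The argument is a backward induction. If agent $j$ deviates (possibly non-Markovianly) from some $t$ on, then conditioning on $\sF_s$ and using the stage-$s$ equilibrium inequality for $s=T-1,\dots,t$ replaces the deviation period by period without increasing her payoff. The same argument applies, with the inequality reversed, to the price-setter.

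\textbf{Main obstacles.} Two points need care.
\begin{enumerate}[leftmargin=2em]
\item \emph{Continuity of the continuation term.} Glicksberg's theorem needs $\pi\mapsto\bE[V_j(t+1,\mathbf b+\bfR(t+1))]$ to be continuous, but equilibrium values of later stages are in general only measurable in the state, not continuous. If $\bfR$ is a finite Markov chain (as in Section~\ref{sec:recharge-allocation}) the expectation is a finite sum, yet $V_j(t+1,\cdot)$ may still be discontinuous in $\mathbf b$. The first fallback is to drop continuity and use existence results for discontinuous games, e.g.\ Dasgupta--Maskin, by showing upper semicontinuity of the sum of payoffs and a better-reply security property. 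A cleaner alternative is to cite the general existence theory of \cite[Ch.~6.3]{BasarZaccour2018} and Nowak-type results for stochastic games with compact action sets, which give correlated or mixed stationary equilibria with measurable selections.
\item \emph{Measurability of the selection.} The map $(\bfw,\bfr)\mapsto$ (stage equilibrium) must admit a measurable selection so that $\bfpi^*$ is an admissible Markov policy. The correspondence of equilibria of a game depending measurably (and, in the good case, continuously) on parameters has closed graph when payoffs are jointly continuous, and Kuratowski--Ryll-Nardzewski then supplies a measurable selector. When payoffs are discontinuous, the plan is to approximate $V(t+1,\cdot)$ by continuous functions, or to randomize over the equilibrium correspondence via its upper hemicontinuous hull.
\end{enumerate}
Most of the technical work is in these two steps. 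The coupled market-clearing constraint \eqref{eq:mrktClearPsi} does not obstruct existence, because it is enforced through the price-setter's penalty rather than through a jointly constrained action set.
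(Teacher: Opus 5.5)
\textbf{There is a genuine gap: it is your Obstacle~1, and none of the fallbacks you list closes it.}

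In your induction the stage-$t$ continuation payoff is $\mathbf b\mapsto\sum_m \bfQ_{r,m}V_j(t+1,\mathbf b+\Theta(r_m),r_m)$. Here $V_j(t+1,\cdot)$ is the value of an arbitrarily selected stage-$(t+1)$ equilibrium, and it is in general only measurable in the state. With finite-state $R$ and deterministic $\Theta_j$, the kernel \eqref{eq:trProb} is a finite mixture of point masses translated by $\mathbf b$, so it does no smoothing. Consequently:
\begin{itemize}
\item Glicksberg's theorem is unavailable, and without joint continuity the closed-graph argument behind your Kuratowski--Ryll-Nardzewski selection disappears as well.
\item Dasgupta--Maskin (or better-reply security) needs upper semicontinuity of the sum of payoffs, a lower-semicontinuity property in own actions, and a structured discontinuity set. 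You cannot check any of these for an arbitrary equilibrium selection.
\item Convexifying the upper hemicontinuous equilibrium-payoff correspondence is the public-randomization route of Harris--Reny--Robson and Nowak--Raghavan. It yields subgame-perfect equilibria \emph{with public randomization}, i.e.\ correlated ones, not the mixed equilibrium claimed.
\end{itemize}
This is not a technicality that a cleverer selection removes. Harris--Reny--Robson give a two-stage game with continuous payoffs, whose second-stage game depends continuously but deterministically on first-stage actions, that has no subgame-perfect equilibrium without public randomization.

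What makes a stagewise construction work is continuity of the kernel in the actions in the setwise (or total-variation) sense. Then $a\mapsto\int V_j(t+1,y)\,q(dy\mid s,a)$ is continuous for every bounded measurable $V_j(t+1,\cdot)$. Glicksberg then applies state by state, and the Carath\'eodory structure (measurable in the state, continuous in the actions) yields a measurable equilibrium selector. This holds, e.g., if $\bfR(t+1)$ has a joint density given $R(t)$. It does not hold in the finite-chain, deterministic-$\Theta_j$ specification.

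\textbf{How the paper proceeds.} The paper does not build stage equilibria by hand. It checks the hypotheses of a general existence theorem:
\begin{itemize}
\item $\cS$ and $\cA_j(\bfw)$ are nonempty Borel spaces;
\item $\bfw\mapsto\cA_j(\bfw)$ is compact-valued and lower measurable;
\item the stage payoff $U_j(L_j)$ is continuous in $(\bfpi,p)$;
\item the kernel \eqref{eq:trProb} is continuous in $(c,b)$.
\end{itemize}
It then applies \cite[Theorem~2]{JaskiewiczNowak2017}, which delivers existence, measurable selection and sub-game perfection at once. Your one-shot-deviation verification, which is fine, is absorbed there too.

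Your ``cleaner alternative'' points this way, but it never states or verifies the hypotheses. A result that only gives correlated equilibria would not prove the statement. The kernel-continuity hypothesis is exactly the one carrying the weight of your Obstacle~1, so it must be verified in the sense the cited theorem requires.

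\textbf{A smaller error.} Your continuity argument for $G_j(t,\cdot)$ is not valid. A concave function on a compact interval can jump down at an endpoint $\underline{c}_j$ or $\overline{c}_j$, which would destroy upper semicontinuity of the stage payoffs. The paper obtains continuity from Berge's maximum theorem applied to \eqref{eq:G}. The objective is continuous in $\boldsymbol\varphi_j$, and the map $C_j\mapsto\{\boldsymbol\varphi_j:\ \eqref{eq:constr_Bounds},\ \sum_k a^k\varphi_j^k=C_j\}$ is nonempty, compact-valued, and upper and lower hemicontinuous on $\cC_j(t)$.
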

\begin{proof}
Note that the state space $\cS$, as well as action space for each agent $\cA_j(\bfw)$ is a nonempty Borel space  endowed  with standard Borel $\sigma$-algebra in $\bR^d$. Clearly, the mapping $\bfw \to \cA_j(\bfw)$ is lower measurable, and compact-valued. In view of the assumptions above, the function $U_j(L_j(t, \bfpi,p))$ is continuous in $\bfpi,p$. Indeed, first we note that the function $\sum_{k=1}^K [\overline{f}^k_j(t, \varphi_j^k) -q^k(t)]$ is continuous in $\boldsymbol{\varphi}$. Second, the feasibility set $\cG(C_j):= \set{ \boldsymbol{\varphi}_j \in \bR^K \mid \eqref{eq:constr_Bounds}, \sum_ka^k \phi_j^k=C_j}$ for $\boldsymbol{\varphi}_j$ for problem \eqref{eq:G} is a non-empty compact in $\bR^K$ for any $C_j\in \cC_j(t)$, as being a closed subset of the the compact set in \eqref{eq:constr_Bounds}. We also note that the function $C_j\to \cG(C_J)$ is upper and lower hemicontinous. Hence, by maximum theorem the function $G_j(C_J,t)$ is continuous in $C_j$ on  $\cC_j(t)$, and thus  $U_j(L_j(t,\bfpi,p))$ is  continuous in $\bfpi,p$. Moreover, the stochastic kernel \eqref{eq:trProb} is continuous in $(c,b)$. Then, by \cite[Theorem~2]{JaskiewiczNowak2017}, a sub-game perfect NE exists. 
\end{proof}

Generally speaking, a sub-game perfect NE is constructed by  the following  standard backward procedure. 
Let us denote the  best-response continuation  value as 
\begin{equation}\label{eq:Vjw}
\cV_j( \bfpi_{-j}, p; t, \bfw, r) = \sup_{\pi\in\sS_j}  A_j(\pi, \bfpi_{-j}, p; t, \bfw, r). 
\end{equation}
Consider the strategy satisfying 
\begin{align}
    \cV_j(\bfpi_{-j}, p; T, \bfw, r)  & = U_j(G_j(T,w_j\wedge \overline{c}_j)), \quad j=1,\ldots, J \\
    V_{J+1}(\bfpi; T, w, r) & = 0,
\end{align}
and for $t=T-1,\ldots, 0$, let 
\begin{align}
\cV_j(\bfpi_{-j}, p; & \, t, \bfw, r)  = \max_{C_j, \psi_j}\Big[U_j \bigl( G_j(t, C_j) + \psi_j p \bigr) \label{eq:DPP-NE}\\
& + e^{-\delta}\bE^{\bfpi}[ \cV_j({\widehat{\bfpi}}_{-j}, \widehat{p}; t+1, \bfw + \bfR(t+1) - \bfC - \boldsymbol{\psi}, R(t+1) ) \mid \bfW(t) = \bfw, R(t) = r] \Big] \nonumber  \\
\cV_{J+1}(\bfpi; & \,t, \bfw, r) =  \min_{p\in[\underline{p}, \bar p]} |\sum_{j=1}^J \psi_j|,   \label{eq:DPP-NE2}
\end{align}
where $\bfR(t) = (R_j(t))_{j\in J}$ and $\bfC = (C_1,\ldots, C_J)$, and $(\widehat\bfpi, \widehat p)$ is time $t+1$ optimal strategy.

The next two main results show that  \eqref{eq:DPP-NE}-\eqref{eq:DPP-NE2} is well defined, yields  a subgame-perfect pure Nash equilibrium, and under mild growth conditions on $U_j$ adn $G_j$, the obtained NE also clears the  market. 

\begin{theorem}\label{th:main}
    The strategy $(\bfpi^\circ, p^\circ)$ defined by \eqref{eq:DPP-NE}-\eqref{eq:DPP-NE2} is a sub-game perfect pure NE.  
\end{theorem}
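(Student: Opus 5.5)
We will prove the statement by backward induction on $t$, combined with the one-shot deviation principle for finite-horizon Markov games. The plan has two parts: first show that the recursion \eqref{eq:DPP-NE}--\eqref{eq:DPP-NE2} is well defined at every stage and produces \emph{pure} Markov actions; then show that the resulting profile $(\bfpi^\circ,p^\circ)$ satisfies the sub-game perfection inequalities at every $(t,\bfw,\bfr)$.

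For well-posedness, at $t=T$ the values are explicit: $U_j(G_j(T,w_j\wedge\overline{c}_j))$ and $0$. These are continuous in $\bfw$ by continuity of $G_j$, which was established in the existence proof via the maximum theorem, together with continuity and monotonicity of $U_j$. Assume that at $t+1$ the continuation values $\cV_j(\widehat\bfpi_{-j},\widehat p;t+1,\cdot,\cdot)$ are bounded, measurable and upper semicontinuous in $\bfw$. At stage $t$, fix the price $p\in[\underline p,\bar p]$ and the others' actions. Agent $j$'s stage objective is then continuous in $(C_j,\psi_j)$ on the compact set $\cA_j(\bfw)$, using that the transition kernel \eqref{eq:trProb} is continuous in $(c,b)$. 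Hence a maximizer exists and can be selected measurably in the state by a measurable selection theorem (Kuratowski--Ryll-Nardzewski). Because $U_j$ and $G_j$ are concave and the next-state variable $b_j=w_j-C_j-\psi_j$ enters affinely, we will try to show by induction that $\cV_j$ is concave in $w_j$. In that case the stage objective is concave, and pure maximizers suffice: mixing cannot improve a concave objective, by Jensen's inequality.

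The coupling between agents enters only through $p$, since next-period allocations satisfy \eqref{eq:dynWj-b} and depend only on $b_j$. So for each fixed $p$ the agents' problems decouple, and each agent's best response $\psi_j(p)$ is a pure, upper hemicontinuous correspondence. The price-setter then minimizes $|\sum_j\psi_j(p)|$ over the compact interval $[\underline p,\bar p]$. A minimizer in pure strategies exists by lower semicontinuity of this function (after choosing a suitable selection), and it is measurable in the state. This yields pure $(\bfpi^\circ(t),p^\circ(t))$.

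For sub-game perfection, we argue backwards. Suppose the inequalities hold for all starting times $s\ge t+1$ and all states. Take any deviation $\pi_j'$ of agent $j$ from time $t$ onward. By the tower property, the deviation's payoff equals the stage payoff at $t$ plus the discounted expectation of the continuation payoff from $t+1$. The induction hypothesis bounds that continuation payoff by $\cV_j(\cdot;t+1,\cdot)$ evaluated at the new state. The maximization in \eqref{eq:DPP-NE} then bounds the total by $\cV_j(\cdot;t,\bfw,r)=A_j(\pi^\circ_j,\bfpi^\circ_{-j},p^\circ;t,\bfw,r)$. The price-setter is handled identically, with a min in place of a max. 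Since the horizon is finite, this one-shot deviation argument suffices.

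The main obstacle is the purity step. It requires both the concavity propagation of $\cV_j$ in $w_j$ (the expectation over $R(t+1)$ and the partial maximization must preserve concavity; this holds since the feasible set is jointly convex in $(w_j,C_j,\psi_j)$), and existence of a pure minimizer for the price-setter, where $\psi_j(p)$ may be set-valued. To handle the latter, we will first try to show that, when $G_j$ is strictly concave (Remark~\ref{rem:Gconcave}), $\psi_j(p)$ is single-valued and nondecreasing in $p$. Then $p\mapsto\sum_j\psi_j(p)$ is monotone with closed graph, and a pure minimizer of its absolute value exists. If that fails, we fall back on compactness of $[\underline p,\bar p]$ together with a semicontinuous selection.
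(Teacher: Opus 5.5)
Your proposal has a genuine gap at the step where you claim that, for fixed $p$, the agents' stage-$t$ problems decouple. It is true that $W_j(t+1)=b_j(t)+R_j(t+1)$ involves only $b_j$. But agent $j$'s continuation value $\cV_j(\widehat\bfpi_{-j},\widehat p;t+1,\cdot,\cdot)$ depends on the whole vector $\bfW(t+1)=\bfb(t)+\bfR(t+1)$, because the price and trades at $t+1$ are determined by everyone's allocations. This is why \eqref{eq:DPP-NE} involves the full vectors $\bfC$ and $\boldsymbol\psi$, why $\cV_j$ carries the argument $\bfpi_{-j}$, and why \eqref{eq:farmer1-2period} contains $\bar b_2$. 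Decoupling does hold at $t=T-1$, where the terminal value depends on $w_j$ alone, but it fails for $t\le T-2$.

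Hence, at a given $p$, the agents still play a genuine game in their banking decisions. Existence of a \emph{pure} equilibrium of that game does not follow from each agent having a pure maximizer. Your Jensen remark only shows that a single agent cannot gain by randomizing against fixed opponents. That is true for any objective and says nothing about whether the agents' choices are mutually consistent.

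The missing ingredient is the fixed-point step in the paper. Berge's theorem gives best-response correspondences $\cR_j(\bfpi_{-j},p;t,\bfw,r)$ that are nonempty, compact-valued and upper hemicontinuous. Concavity of $\cQ_j$ in the own action $(C_j,\psi_j)$ makes them convex-valued. Kakutani's theorem applied to $\prod_j\cR_j$ then produces, for each $p$, a pure profile $\bfpi^\circ\in\cR(\bfpi^\circ,p;t,\bfw,r)$. So concavity is indeed needed, but to make best responses convex-valued, not to exclude mixing.

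The same error propagates to your price step. There is no stand-alone map $p\mapsto\psi_j(p)$, so single-valuedness and monotonicity of $\psi_j(p)$ are not available even for strictly concave $G_j$. A workable repair is as follows.
\begin{itemize}
\item Let $E_t(p)$ be the set of pure stage equilibria at price $p$; it is nonempty by Kakutani.
\item Each $\cR_j$ has closed graph (Berge, using continuity of $\cQ_j$ as in the paper). Hence $E_t$ has closed graph in the compact set $[\underline p,\bar p]\times\prod_j\cA_j(\bfw)$.
\item Therefore $p\mapsto\min\{|\sum_j\psi_j| : \bfpi\in E_t(p)\}$ is lower semicontinuous and attains its minimum on $[\underline p,\bar p]$.
\end{itemize}
Measurable selection and your one-shot-deviation verification then complete the argument; that verification is more explicit than the paper's ``follows at once''.

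Finally, your justification of concavity propagation via joint convexity of the feasible set is too quick. The continuation value is evaluated along the other agents' Markov strategies and the time-$(t+1)$ price, both of which move with $\bfW(t+1)$. Partial maximization alone therefore does not yield concavity in $b_j$. This step needs a separate argument, and the paper is also brief on it.
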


\begin{theorem}\label{th:ne-market-clearing}
	Assume that $U,G_j$ are increasing, concave, and $U\in C^1(\bR), G_j(t,\cdot)\in C^1(\bR_+)$ with bounded derivatives on the admissible sets, that is there exist $M_U, M_G$ such that
	\[
	0\leq U'(x)\leq M_U, \quad 0\leq G_j'(t,C)\leq M_G,
	\]
	for all $t$ and all feasible $x,c$.\footnote{Note that in our case, bounded derivative can be reduced to continuous derivative, since given discrete time,  finite chain, and bounded constraints on $C$, the feasible set is a compact.}    
		Then the sub-game perfect equilibrium $(\bfpi^\circ,p^\circ)$ clears the market, i.e., satisfies
		\(
		\sum_{j=1}^J\psi_j^\circ(t)=0. 
		\)
	\end{theorem}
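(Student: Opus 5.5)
The plan is to work one period at a time, using the backward construction \eqref{eq:DPP-NE}--\eqref{eq:DPP-NE2}. By Theorem~\ref{th:main} the strategy $(\bfpi^\circ,p^\circ)$ is obtained by solving, at each $t$ and each state $(\bfw,r)$, a static game. In that game agent $j$ maximizes
\[
U_j\bigl(G_j(t,C_j)+\psi_j p\bigr)+e^{-\delta}\Phi_j(b_j),
\]
where $\Phi_j$ is the continuation value viewed as a function of the banked amount $b_j=w_j-C_j-\psi_j$. The price-setter minimizes the excess supply $|\sum_j\psi_j|$. Market clearing therefore reduces to showing that, for fixed $t$ and state, the aggregate net supply $\Psi(p):=\sum_j\psi_j^\circ(p)$ has a zero on $[\underline p,\bar p]$. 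Once this holds, the price-setter's minimum in \eqref{eq:DPP-NE2} equals $0$ and is attained at such a $p^\circ$. I would first record that $\Phi_j$ is concave and nondecreasing in $b_j$. This follows by backward induction from $\cV_j(T)=U_j(G_j(T,w_j\wedge\overline c_j))$, because concavity and monotonicity are preserved under maximization of a jointly concave objective over a convex feasible set that is affine in $w_j$.

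For fixed $p$, agent $j$'s problem is then a concave maximization over a compact convex set, and its solution set is convex and upper hemicontinuous in $p$ by Berge's maximum theorem. The key steps follow the one-period argument of \cite{CialencoLudkovski2025}:

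\begin{enumerate}
\item \textbf{Monotonicity.} By a standard supermodularity/first-order comparison, the optimal $\psi_j$ is nondecreasing in $p$. A higher price raises the marginal benefit of selling, namely $U_j'\cdot p$, relative to the marginal cost $U_j'G_j'$ or $e^{-\delta}\Phi_j'$.
\item \textbf{Boundary behaviour at high prices.} The bounded-derivative hypothesis enters here. Take $\bar p> M_G$ and also larger than the marginal continuation value of a banked unit, which is bounded in terms of $M_U M_G$ and the horizon. Then every agent strictly prefers selling all disposable water down to $\underline c_j$, so $\Psi(\bar p)\ge 0$, with positive aggregate supply unless all agents are at the lower boundary.
\item \textbf{Boundary behaviour at low prices.} For $\underline p$ small, buying is strictly profitable at the margin, so $\Psi(\underline p)\le 0$.
\item \textbf{Intermediate value.} Combining $\Psi(\underline p)\le0\le\Psi(\bar p)$ with monotonicity and the convex-valued upper hemicontinuous correspondence, a Kakutani/intermediate value argument for monotone correspondences gives a $p^\circ$ and a selection with $\Psi(p^\circ)=0$.
\end{enumerate}

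I expect step 2 to be the main obstacle: bounding the marginal value of banking uniformly. This requires showing that $\Phi_j$ is Lipschitz with constant $L$ independent of the state, so that $\bar p$ can be chosen above $L$. I would prove this by induction backward in $t$. Let $U_j$ have Lipschitz constant $M_U$ and let the period gain have marginal value at most $\max(M_G,\bar p)$. The continuation value then gains at most $e^{-\delta}\cdot(\text{previous Lipschitz constant})$ per unit of extra water, which gives a finite constant after the finitely many periods $T$. Since $\bar p$ itself appears in this bound, I would fix $\bar p$ self-consistently: note that at any equilibrium a unit of water can never be worth more than its best use, which is consumption ($\le M_UM_G$) or resale at the equilibrium price. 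Inducting backward then gives the uniform bound $M_G$ in monetary units. Pure strategies are then automatic, since the selection is deterministic.
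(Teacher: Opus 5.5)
Your outline follows the paper's proof in structure. It establishes concavity and a slope bound for the continuation value (the paper's Lemma~\ref{lemma:supergrad1}), then the sign of $\psi_j$ at the two ends of $[\underline p,\overline p]$, then an intermediate-value step for the aggregate correspondence. You also correctly notice that resale at the next-period price enters the slope bound. Lemma~\ref{lemma:supergrad1} misses this, because its bound $M_UM_G$ tracks only the banking channel.

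\textbf{Low-price end (Step~3).} This step needs every potential seller to value a marginal unit of water above $\underline p$. The hypotheses give no positive lower bound. Consumption is capped at $\overline c_j$, and $\cV_j(T,\cdot)=U_j(G_j(T,w_j\wedge\overline c_j))$ is flat beyond the cap. Take $t=T-1$ and a state with $w_j>2\overline c_j$ for every $j$. Each agent strictly gains by selling her surplus at any $p>0$, and nobody buys. So $\sum_j\psi_j>0$ on all of $[\underline p,\overline p]$, and the theorem's conclusion itself fails there. Step~3 therefore needs an extra scarcity or positivity hypothesis, or a price allowed to fall to $0$. The paper's version of this step has the same hole: Lemma~\ref{lemma:supergrad1} gives only $\zeta\le0$, which cannot make $\xi=U_j'p+\zeta$ negative.

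\textbf{High-price end (Step~2).} Comparing $\overline p$, in dollars, with a continuation slope, in utils, means dividing by $U_j'(L_j(t))$. The hypotheses bound $U_j'$ only from above; the paper's inequality $\xi\ge M_Up-M_UM_G$ also silently assumes a lower bound. Your ``self-consistent'' uniform bound $M_G$ in monetary units is false. Already at $t=T-1$, a banked unit is worth $e^{-\delta}\bE[U_j'(L_j(T))\,\partial_CG_j(T,\cdot)]/U_j'(L_j(T-1))$ dollars today. This exceeds $M_G$ when tomorrow's expected marginal utility is well above today's, which is exactly the precautionary motive, and agents may then buy to bank at prices above $M_G$.

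Your first instinct, a constant that is finite because $T<\infty$, is the right one. The argument goes as follows.
\begin{itemize}
\item If $U$ is strictly increasing, then $U'\ge m_U:=U'(\max_{j,t,C}G_j(t,C))>0$ at every income with $\psi_j\le0$.
\item It suffices to show that nobody \emph{buys} at the cap. ``Everyone sells down to $\underline c_j$'' is unnecessary, and would need a lower bound on $U'$ at incomes $G_j+\psi_j\overline p$ that grow with $\overline p$.
\item The slope of $\cV_j(t+1,\cdot)$ is controlled only by $M_U$ times the date-$(t+1)$ price level. Under this estimate, a single cap is self-consistent only if $e^{-\delta}M_U<m_U$.
\item Otherwise, choose date-dependent caps $\overline p_t$ backward in time, of order $M_G(e^{-\delta}M_U/m_U)^{T-t}$.
\end{itemize}

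\textbf{Remaining steps.} Step~1 is false in general. The cross-partial $U'+U''\psi_jp$ is negative once $-U''\psi_jp/U'>1$, which gives backward-bending supply. It is also unnecessary: a nonempty, compact, convex-valued, upper hemicontinuous correspondence on an interval already has the intermediate-value property.

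That convexity, however, rests on concavity of $\Phi_j$, which your induction does not deliver for $t\le T-2$. The price $p^\circ(t+1,\cdot)$ and the other agents' equilibrium banking depend on the whole state, including $w_j$. So $\Phi_j$ is not the value of a single concave program in which $w_j$ enters affinely. The paper takes the same step for granted.

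Finally, agents are coupled through $\bfb$, so summing best responses to a fixed $\bfpi_{-j}$ does not produce an equilibrium profile. Obtain the zero instead from a joint Kakutani fixed point in $(\bfpi,p)$, with a price player who maximizes $-p\sum_j\psi_j$. The two boundary steps then rule out $\sum_j\psi_j\neq0$.
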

	
The proofs of Theorems~\ref{th:main} and \ref{th:ne-market-clearing} are given in the Appendix~\ref{sec:apend1}.

In what follows, we focus on the Pareto optimal (PO) equilibrium defined above, denoting all corresponding quantities as $p^\circ, b^\circ, C^\circ$, etc. The respective farmers' game values are denoted as $V_j^\circ$, being the counterpart of \eqref{eq:DPP-NE} for the Pareto case.  The key lemma is that \cite{CialencoLudkovski2025} ensures that there is a unique $p^\circ(\bfw)$
for any feasible allocation $\bfw$ allowing to iterate the recursion for as long as desired.

\subsection{Dynamic Programming}

Fixing a horizon $T$, we may view the agents as determining at each period $t$ the two controls $\pi_j=(C^\circ_j(t),b^\circ_j(t))$ ---how much to consume and how much to bank. The game aspect implies that the stochastic state is the present allocation vector $\bfw = \bfW(t)$ and the recharge regime $R(t)$. The allocation $\bfw$  determines the endogenous price $p^\circ(\bfw)$ and the corresponding consumption quantities $C^\circ_j(\bfw)$.  Conceptually, $\bfW(t)$ consists of the carry-forward banked balances $\bfb^\circ(t-1)$ and the new allocation $\Theta(R(t))$ based on the latest recharge $R(t)$, which in turn affects expectations of future recharges.

Denote by 
\begin{align}\label{eq:one-period-v}
v_j(\tilde\bfw) := U_j \left(G_j(t,C^\circ_j (\tilde\bfw) ) + \psi^\circ_j(\tilde\bfw) p^\circ(\tilde\bfw)\right) = U_j( L^\circ_j(t) ),
\end{align} 
the one-period utility of agent $j$ under the PO equilibrium and with available (non-banked) rights vector $\tilde\bfw$. 
Using dynamic programming, the effect of $b_j$ is then captured by adding the profit today that is measured by $v_j$ (and reduced through $b_j$) and the expected future profit from tomorrow until $T$ that is measured by $V^\circ_j$ (and increased through $b_j$). The present profits depend on the joint $\bfw-\bfb(t)$ and the future profits depend on the joint $\bfR({t+1})+\bfb(t)$ via \eqref{eq:dynWj-b}, coupling the players and necessitating solving for an equilibrium. Namely, to find the Pareto Nash equilibrium at period $t$ we seek 
a \emph{fixed point} $\{b^\circ_j(t), j=1,\ldots, J\}$ such that $b^\circ_j(t)$ is the best response given $\bfb^\circ_{-j}(t)$, with the conditioning on current $\bfW(t)$ and $R(t)$ done in the background.

The resulting multi-period game value $V^\circ_\cdot : \{1, \ldots, J\} \times \bT \times \bR^2_+ \times \{1, \ldots, M\} \mapsto \bR$ satisfies the DPP 
\begin{multline}
      V^\circ_j(t,\bfw, r):=   v_j \left(\bfw - \bfb^\circ(t,\bfw,r) \right) \\ + e^{-\delta} \cdot \bE \left[ V^\circ_j \left(t+1, 
    \bfR(t+1) + \bfb^\circ(t,\bfw,r), R(t+1) \right) \Big| \ R(t) = r\right],
\end{multline}
where $e^{-\delta}, \delta  \ge 0$ is the annual discount factor.

Crucially, below we break the overall equilibrium construction into two steps: given $\bfb(t)$ the price and production quantities in period $t$ are determined through the 1-period game, while $b_j^\circ(t)$ can be obtained through the fixed point game where the agents only need to pick their banking amounts. The resulting period-$t$ water price then emerges endogenously $p^\circ(t, \bfw - \bfb^\circ(\bfw, r))$. Lastly, the water consumptions $C^\circ_j(t)$ and the trading amounts $\psi^\circ_j(t)$ are determined from $p^\circ(t)$ and $\bfw$.

\subsection{Two Agents}
For two agents $J=2$, equilibrium construction
 can be reduced to examining the best response curves $\bar{b}_2 \mapsto B_1( \bar{b}_2)$ and $\bar{b}_1 \mapsto B_2( \bar{b}_1)$ and looking for their crossing points. 
Specifically, given current allocation $\bfW(t) = \bfw$ and recharge state $R(t)=r$, for any $\bar{b}_1,\bar{b}_2 \in [0, \mathfrak{W}-\sum_i \underline{c}_i(t)]$ (cf.~\eqref{eq:bj-bounds}) define
\begin{align} \label{eq:farmer1-2period}
  B_1(t,\bar{b}_2, \bfw ,r):=  &\arg\sup_{b_1 \ge 0} v_1 \!\Big(\bfw - \begin{bmatrix} b_1 
    \\ \bar{b}_2 \end{bmatrix} \!\Big) + e^{-\delta}\bE \! \left[V^\circ_1 \!\left(\! t+1, \bfR(t+1) +
    \begin{bmatrix} b_1 \\ \bar{b}_2 \end{bmatrix}, R(t+1) \!\right) \Big| \ R(t) = r\right] \!; \\ \label{eq:farmer2-2period}
  B_2(t,\bar{b}_1, \bfw, r) :=      &\arg\sup_{b_2 \ge 0} v_2 \!\left(\bfw - \begin{bmatrix} \bar{b}_1 \\ b_2 \end{bmatrix}\!\right) + e^{-\delta}\bE \left[ V^\circ_2 \left(t+1, \bfR(t+1) +
    \begin{bmatrix} \bar{b}_1 \\ b_2 \end{bmatrix}, R(t+1) \!\right) \Big| \ R(t) = r\right]. 
\end{align} 
Equilibrium $(b_1^\circ, b_2^\circ)$ is characterized by $b^\circ_j(t) = B_j(t, b_{-j}^\circ, \bfw ,r), j=1,2$. Indeed, if at any $t, \bfw, r$, we have $b_j^\circ(t) \neq B_j(t, b_{-j}^\circ,t)$ then \eqref{eq:DPP-NE} will not be satisfied as a farmer can do better. Conversely, if the fixed point $b^\circ_1(t), b^\circ_2(t)$ exists, then it yields a feasible action at that state in period $t$.

Since $b_j \in [0, \mathfrak{W}(t)]$, the optimization problems in \eqref{eq:farmer1-2period}-\eqref{eq:farmer2-2period} are of a scalar function on a bounded interval, so a solution always exists. However, equilibrium may not be interior and corner solutions of $b^\circ_j=0$ or $b^\circ_j = \mathfrak{W}(t)$ (buy up everything available and bank it all) are possible.  We expect the above best response curves to be monotone, hence there is a unique fixed point. A standard mechanism to determine the fixed point is to iterate the alternating updates $b_1^{(k)} := B_1(t, b_2^{(k-1)}, \bfw, r), b_2^{(k)} := B_2(t, b_1^{(k)}, \bfw, r)$ that are initialized from an arbitrary $b_1^{(0)}, b_2^{(0)}$ and repeated for $k=1,\ldots, K_{b}$.

\section{Numerical Algorithm}\label{sec:algorithm}

To solve \eqref{eq:farmer1-2period}-\eqref{eq:farmer2-2period} we employ nested optimization, decomposing the decisions into a 2-stage procedure. The ``inner'' decision is to pick $C_j,\psi_j$ conditional on the total water that would be consumed in the present period. This implies the respective price $p^\circ_t$ and the respective period-$t$ profits $v_j$. The ``outer'' decision is how much water to bank which is driven by the expectations of next-period value functions $V^\circ_j(t+1,\cdot)$.

The inner problem is equivalent to the 1-period setup studied in \cite{CialencoLudkovski2025}. It can be solved in closed-form conditional on $p$, the latter is determined by inverting an algebraic function that matches total desired consumption and total available (ex-post banking) groundwater. A key simplification is that for the inner problem there is no direct competition, and agent decisions can be solved in parallel, stacking up their desired consumption amounts $C_j(p)$ as a function of $p$. 

The outer problem is a ``true'' game where we must determine a competitive Nash equilibrium $\bfb^\circ$. In the outer stage, the agents must decide how much water to bank, which directly influences how much water is available in this period, as well as in the next period. This creates both a feedback effect for the agent who is banking (less profit today vs more profit next year), as well as externality for the other agent (higher prices today vs lower prices next year). Consequently, this requires determining a fixed point that we characterize through the best-response functions \eqref{eq:farmer1-2period}-\eqref{eq:farmer2-2period}. In turn $B_j$'s include the 1-period profits $v_j$ as one of the terms, coupling with the solution of the inner problem.

In the algorithm below, the state variables are the current period $t$, the current water allocation $\bfw$ and the current precipitation state $R(t)$ (hence a function in 2 continuous dimensions of $\bfw$, indexed sequentially by $t$ and indexed discretely by $r \in \{1,\ldots,M\}$). The surrogates $\widehat{V}_j(\cdot)$ are defined recursively before being passed in subsequent steps. 
\begin{enumerate}
    \item Create a subroutine that determines, for a given groundwater allocation $\bfw$, the one-period profits $v_j(\bfw)$'s by evaluating the aggregate water demand $\mathfrak{C}(p)$ as a function of $p$, and then finding through numerical one-dimensional optimization $p^\circ_1(\bfw) = \mathfrak{C}^{-1}(\bfw_1 + \bfw_2)$. 

    \item Initialize $\widehat{V}_j(T-1,\cdot, r) = v_j(\cdot)$ for all $r=r_m, m=1,\ldots, M$ (number of states of $R(\cdot)$);

    \item For $t=T-2, \ldots, 0$:
    
    \item Select a collection of $N$  groundwater endowments $\bfw^n, n=1,\ldots, N$ which act as training inputs; fix current recharge regime $r$.

    \item Define the best-response function $\widehat{B}_j(t, \bar{b}_{-j}, \bfw, r)$ for agent $j$ in period $t$, given the other's banking amount $b_{-j}$. $\widehat{B}_j$ is the counterpart of \eqref{eq:farmer1-2period}, substituting the unknown $V^\circ_j$ with $\widehat{V}_j$.
   Given $R(t) =r_k $, the conditional expectation in \eqref{eq:farmer1-2period} reduces to a  weighted sum  
   \begin{align}\label{eq:quantized-expectation}
       \sum_{m=1}^M \bfQ_{r,m} \widehat{V}_j \left(t+1, \Theta(r_m)+[ b_j, \bar{b}_{-j}]^\top, r_m\right).
   \end{align}

   \item For each $\bfw^n,r$, $n=1,\ldots, N$ find the fixed point banking amounts $\bfb^\circ(t, \bfw^n, r)$, the resulting equilibrium groundwater price $p^\circ(t, \bfw^n,r)$ and the consumption amounts $C^\circ_j(t)$,  by iterating until convergence the best-response optimization of $\widehat{B}_j$'s defined in Step 5. Save the resulting pointwise game values $$v_j^n := v_j(\bfw^n - \bfb^\circ(t,\bfw^n, r)) +  \sum_m \bfQ_{r,m} \widehat{V}_j \left(t+1, \Theta(r_m)+\bfb^\circ(t, \bfw^n, r), r_m\right)$$

  \item Train two machine learning surrogates $\widehat{V}_j(t,\cdot,r), j=1,2$ based on the $N$ pairs $(\bfw^{1:N}, v^{1:N}_j)$. 
  
  \item Repeat Steps 4-7 for each recharge state $r=r_m,m=1,\ldots,M$.

\end{enumerate}

  Note that if it takes C steps to find each $b_j^\circ(t, \bfw^n, r)$, then the algorithm entails $C \cdot M$ evaluations of $\widehat{V}^\circ(t+1, \cdot)$'s and $C$ evaluations of $v_j$'s (which requires determining $p^\circ(t,\cdot)$) in Step 1. Hence, the above algorithm leads to a total of $T \cdot N \cdot \bar{C} \cdot M$ surrogate evaluations and comparable number of $p^\circ$ evaluations, where $\bar{C}$ is the average number of steps to determine the banking equilibrium.

Specifically, we approximate $\widehat{V}_j(t,\cdot)$ using bivariate smoothing splines available in \texttt{scipy}. To train those splines,  we grid out a rectangle of $w^{n}_1, w^{n'}_2$ over $64\times 64$ allocations in period $t$, and for each grid point determine the optimal banking pair $b^\circ_j(w^n_1, w^{n'}_2)$ by doing $\bar{C}/2=12$ best-response iterations. This corresponds to solving 20 one-dimensional optimization problems as in \eqref{eq:farmer1-2period}, where the objective function is given in terms of next-period spline objects $\widehat{V}_j(t+1, \cdot)$. Total effort is therefore $64 \times 64 \times 24 = 98,304$ optimization problems to solve at each time-step which can be parallelized if multiple computing cores are available.

\begin{remark}
 while the realized available water rights $\bfW(t)$ are tightly concentrated along the diagonal (cf.~Figure \ref{fig:joint}), we have to train $\widehat{V}_j$ over a much larger grid because the numerical optimizers might consider banking amounts that are far from optimal (e.g. a very large $b_j$ while the equilibrium $b_j \simeq 0$) and hence we need a reliable estimate of the resulting future profits due to potential sub-optimal actions to avoid finding spurious equilibria. This is the key difference of our ``direct'' method vis-a-vis a Reinforcement Learning strategy that would exploit more heavily the empirical distribution of $\bfW(t)$ along equilibrium strategies.
\end{remark}

\subsection{Synthetic Case Study}\label{sec:illustrate}
 
To illustrate, we consider two agents and two crops with parameters 
\begin{align*}
\text{agent } j=1:& \qquad \alpha_1^1=0.75, \, \alpha_1^2 = 0.85, \quad f_1^1=5, f_1^2 = 9, \quad 
c_1^1= 1.5, \, c_1^2 = 3, \quad a_1^1= 1, a_1^2 = 2,\\ 
\text{agent } j=2:& \qquad  \alpha_2^1=0.75, \, \alpha_2^2 = 0.75, \quad f_2^1=8, f_2^2 = 11, \quad
c_2^1= 2, \, c_2^2 = 4, \quad a_2^1= 1, a_2^2 = 2.
\end{align*} 
Thus, Agent 1 has lower production costs for both crops $c_1^k < c_2^k$ but also lower revenue $f_1^k < f_2^k, k=1,2$, so Agent 2 is a ``premium'' high-cost farmer. Moreover because $\alpha_2^2 < \alpha_1^2$, the per-unit  revenue of Agent 2 falls faster, i.e.~the premium producer has faster-diminishing returns to scale (e.g.~due to less suitable land for their premium produce). As a result, as shown in Figure \ref{fig:Gj}, the PnL function $G_2(C_2)$ is more concave than $G_1(C_1)$. For the production feasible range we take 
$n_1^1 =  n_2^1 = 0, n_1^2 = n_2^2 = 5$ and 
$N_1^1 = N_2^1 = 50$, $N_1^2 = N_2^2 = 100$. 

For both agents,  we use log utility
\begin{align}
U_j(\ell) = \tilde{u}(\ell)= \log( \ell-20), \quad j=1,2.
\end{align}

\begin{figure}[!htb]
\centering
\includegraphics[width=0.65\textwidth]{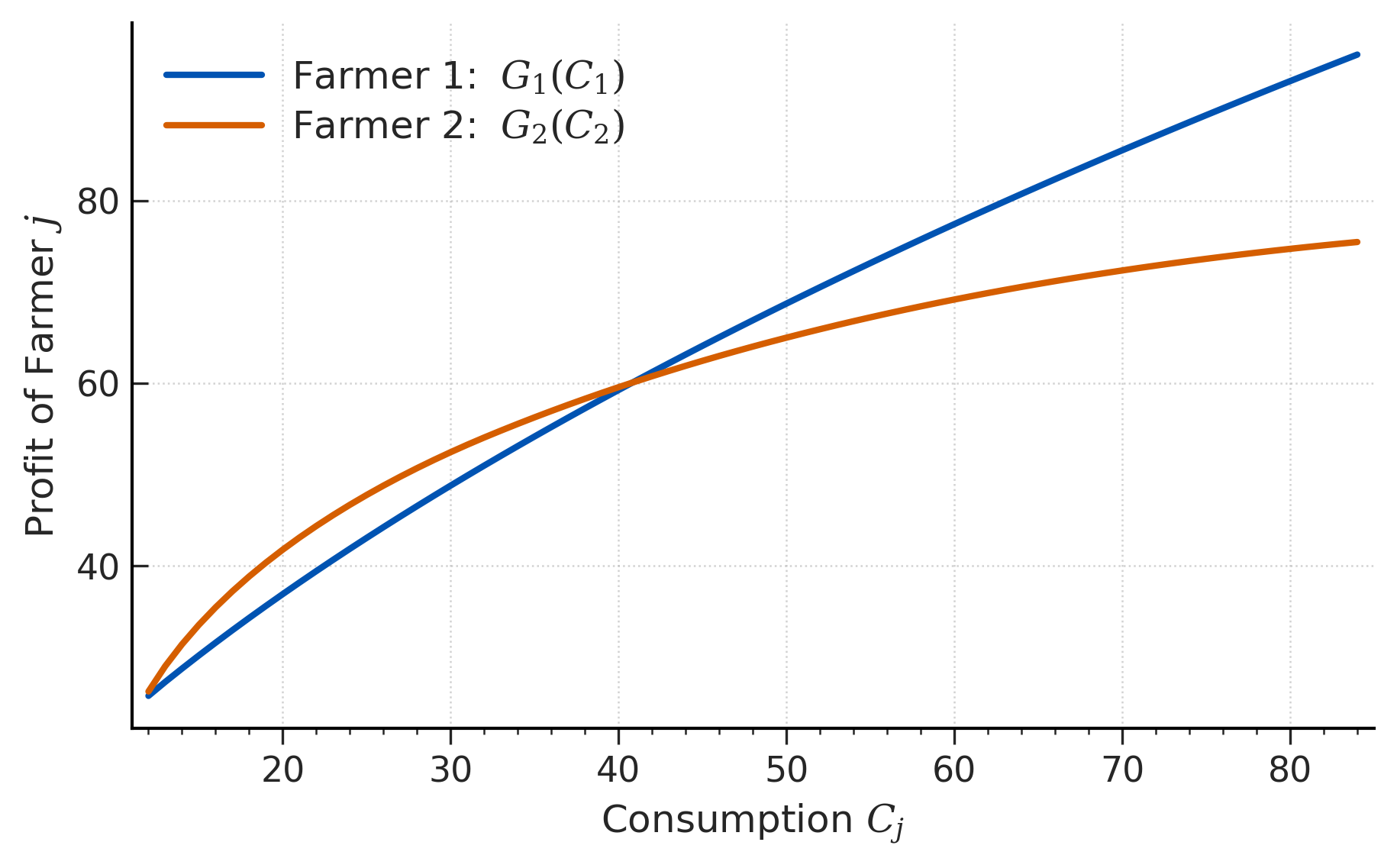}
\caption{{Profit $G_j(C_j)$} of each farmer for the base example in Section \ref{sec:illustrate}.  \label{fig:Gj}}
\end{figure}

We assume that there are $M=3$ water-year regimes with recharge amounts $R(t) \in \{40, 75, 95\}=\{r_1,r_2, r_3\}$, thought of as dry, medium, and wet years, with time-stationary dynamics $\bP(R(t+1) = r_j | R(t)=r_i) = Q_{ij}$ specified by the transition matrix $\bfQ_1$ below:
\begin{align*}
\mathbf{Q}_1 =  \begin{bmatrix} 1/3 & 1/3 & 1/3 \\
                        1/12 & 11/24 & 11/24 \\
                        1/12 & 11/24 & 11/24 \end{bmatrix}.
\end{align*}
The stationary distribution of $R$ is $\pi_R = [1/9, 4/9, 4/9]$ so that droughts occur on average every 9 years and the rest of the time recharge is equally likely to be either 75 or 95. Consequently, the unconditional mean recharge is $\bE[ R(t)] = 80$.
The allocations for the agents are proportional to $R(t)$ and specified by $\theta_1(t) = 0.6, \theta_2(t) = 0.4 \ \forall t$.

Figure \ref{fig:20-period} illustrates one scenario $\omega$ of the market across 25 periods $t=1,\ldots, 25$, visualizing all the pathwise quantities associated with a given path $R(t; \omega)$. The top panel shows the resulting groundwater price $p^\circ(t; \omega)$; the middle panels show the banking amounts $b^\circ_j(t; \omega)$, as well as the trading amounts $\psi^\circ_j(t; \omega)$. The bottom panel shows the corresponding groundwater pumping $C^\circ_j(t; \omega)$. To help with the interpretation, the dashed lines in the bottom panel visualize the recharge amounts $R_j(t; \omega)$.  Note that we initialize with additional ``reserves'' $\mathfrak{W}(0)=90$.  

The most salient impact of banking is smoothing of consumption, visualized in the bottom panel of Figure \ref{fig:20-period}. Compare the highly volatile recharge time-series (the dotted curves $R_j(t)$) against the much smoother consumption trajectory $C^\circ_j(t)$. In fact for Farmer 2, $C^\circ_2(j)$ is almost constant between $t=5$ and $t=23$. By banking a reserve of carryover pumping rights, farmers are able to withstand drought (see e.g., $t=6$) maintaining steady  consumption and hence profits.  Drought does substantially deplete banked allocations (see the drop in $b^\circ_j(6)-b^\circ_j(5)$ and $b^\circ_j(13)$ being zero), so that after multiple dry years, consumption would eventually be affected. The three consecutive years of drought at $t=11, 12, 13$ lead to zero reserves left after the third year.

In order to create this reserve cushion, farmers quickly bank up a significant reserve, witness the rapid growth in $b^\circ_j$ for $t \le 5$. Note that according to our bookkeeping notation, $b^\circ_j(t)$ represents any water that is not used up in the present; thus the case $b^\circ_j(t-1)=b^\circ_j(t)$ corresponds to the situation where the farmer had a carryover that was left as-is and banked through for the next period. In the meantime the farmer simply consumed the latest ``new'' allocation of $\Theta_j(R(t))$. So relative to $\Theta_j(R(t))$, true saving occurs when $b^\circ_j(t)>b^\circ_j(t-1)$ and conversely when $b^\circ_j(t)<b^\circ_j(t+1)$ the farmer is pumping more than allocated by using up some of their reserves. 

The top panel of Figure \ref{fig:20-period} showcases the expected negative correlation between $p^\circ(t)$ and $R(t)$: prices decrease during wet years with lots of recharge and increase during droughts. Towards the end of the simulation, the presence of the finite horizon $T$ removes the intertemporal substitution motive---farmers begin spending down their reserves in anticipation of the zero terminal condition and groundwater prices collapse until the drought at $t=24$. 
Trading between the two farmers is substantially steady. Notably, Farmer 1 buys rights from Farmer 2 under normal conditions, but sells them water in droughts (see $t=7, 12, 13$). In fact, while normally $C^\circ_1(t) > C^\circ_2(t)$ in line with higher allocation to Farmer 1, we observe that $C^\circ_2(13)>C^\circ_1(13)$ so ultimately it is Farmer 1 whose consumption is reduced the most due to a multi-year drought.

\begin{figure}[!htb]
\centering
\includegraphics[width=0.9\textwidth]{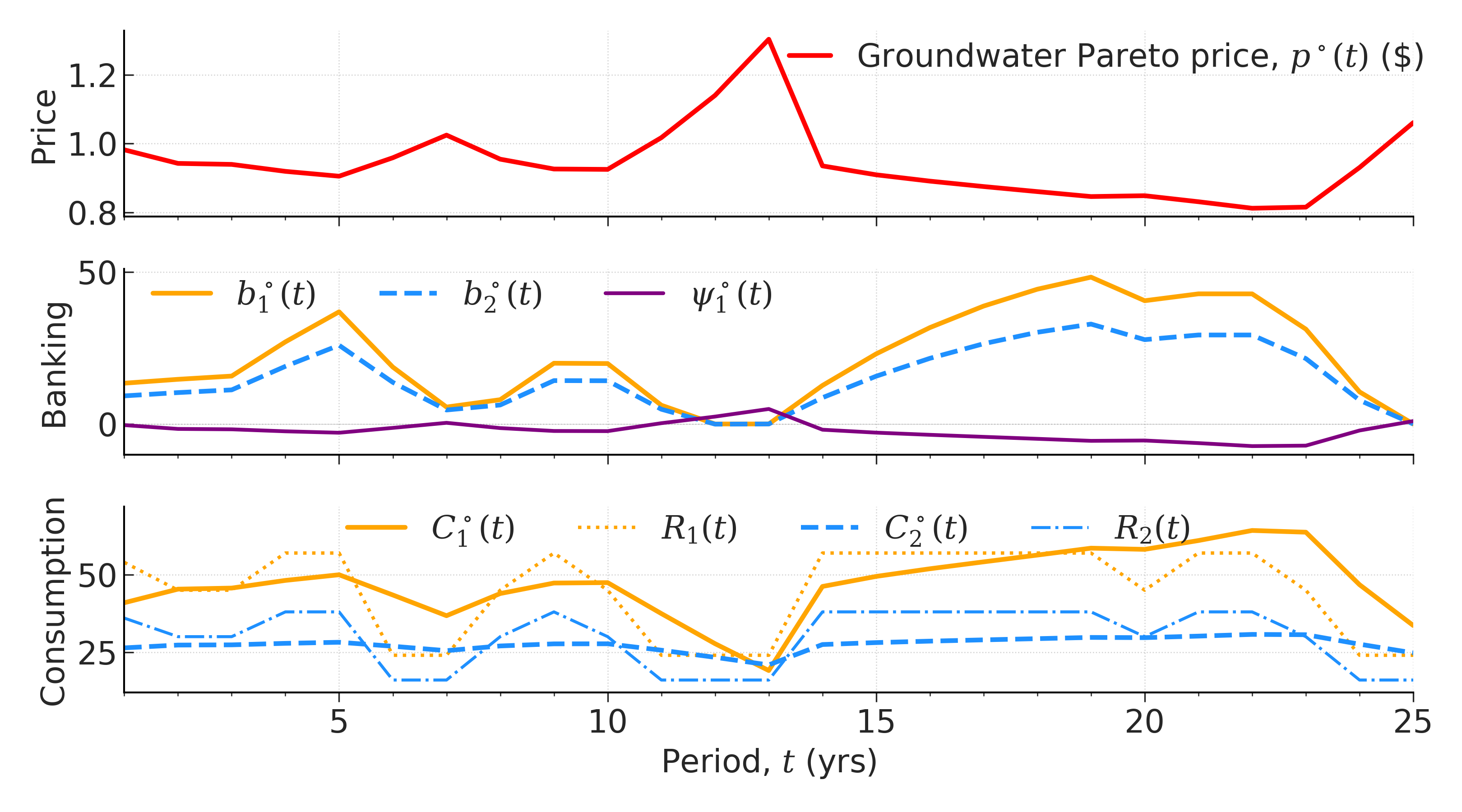}
\caption{Sample scenario of the market over $T=25$ periods. \emph{Top panel:} equilibrium groundwater price $p^\circ(t)$; Middle panel: equilibrium banking amounts $b^\circ_j(t)$ and trading amount $\psi^\circ_1(t)$; $\psi^\circ_1(t)>0$ $(\psi^\circ_1(t)<0$) indicates that Farmer 1 is buying (selling) water from Farmer 2. \emph{Bottom:} allocations of new pumping rights $R_j(t)$ and resulting equilibrium consumption amounts $C^\circ_j(t)$. By construction $C^\circ_j(t)+\psi^\circ_j(t)+b^\circ_j(t)=b^\circ_j(t-1)+R_j(t)$.  \label{fig:20-period}}
\end{figure}

To understand aggregate behavior,
Figure \ref{fig:Banking-price} shows the distributions of all the state variables and controls across 1000 randomly sampled trajectories of $R(\cdot)$.
We plot the resulting equilibrium banking amounts $b^\circ_1(t), b^\circ_2(t) \ge 0$, groundwater consumption $C^\circ_1(t), C^\circ_2(t)$, trades $\psi^\circ_1(t)$ (recall that $\psi^\circ_2(t) = -\psi^\circ_1(t)$ due to \eqref{eq:mrktClearPsi}) and groundwater prices $p^\circ(t)$. 
The plots illustrate the time-dependency inherent in our finite-horizon setup. First,  close to $T$, the risk of future droughts diminishes and farmers become more myopic, banking less and consuming more. In particular, in the last period 25 there is no banking and consumption often shoots up as farmers spend all their carry-forward allocations. The precautionary motive of banking pumping rights for a dry year starts to diminish about 4 periods prior to the ``end of time'', causing less banking to be done in years $t>20$. Second, because the initial condition is taken as fixed, for small $t$ all the distributions are discrete due to the finite state of $R(\cdot)$. At $t=0$ there is just a single realization; similarly there are only three possible values for $p^\circ(1)$ and $B_j(1)$, nine values for $p^\circ(2)$ and $b^\circ_j(2)$, etc. Third, for $t \in \{8,\ldots, 20\}$ we observe a steady-state behavior which effectively approximates the infinite-horizon setting where strategies are time-stationary.

Turning our attention to the price behavior,  the distribution of $p^\circ(t)$ has a strong right skew, meaning that equilibrium prices tend to be fairly steady in the range $[0.8, 1]$ but occasionally can be much higher. Notably, $p^\circ(t)$ is bounded above: the highest price $\bar{p} = 1.36$ is obtained during droughts when the farmers have zero carry-forward and their total consumption is just $r_1=45$.  In addition, in some scenarios we see a price collapse as we approach $T$, with prices dropping below 70 cents as farmers race to spend all their unused banked allocations due to the zero terminal condition. 

\begin{figure}[!htb]
\centering
\includegraphics[width=0.95\textwidth]{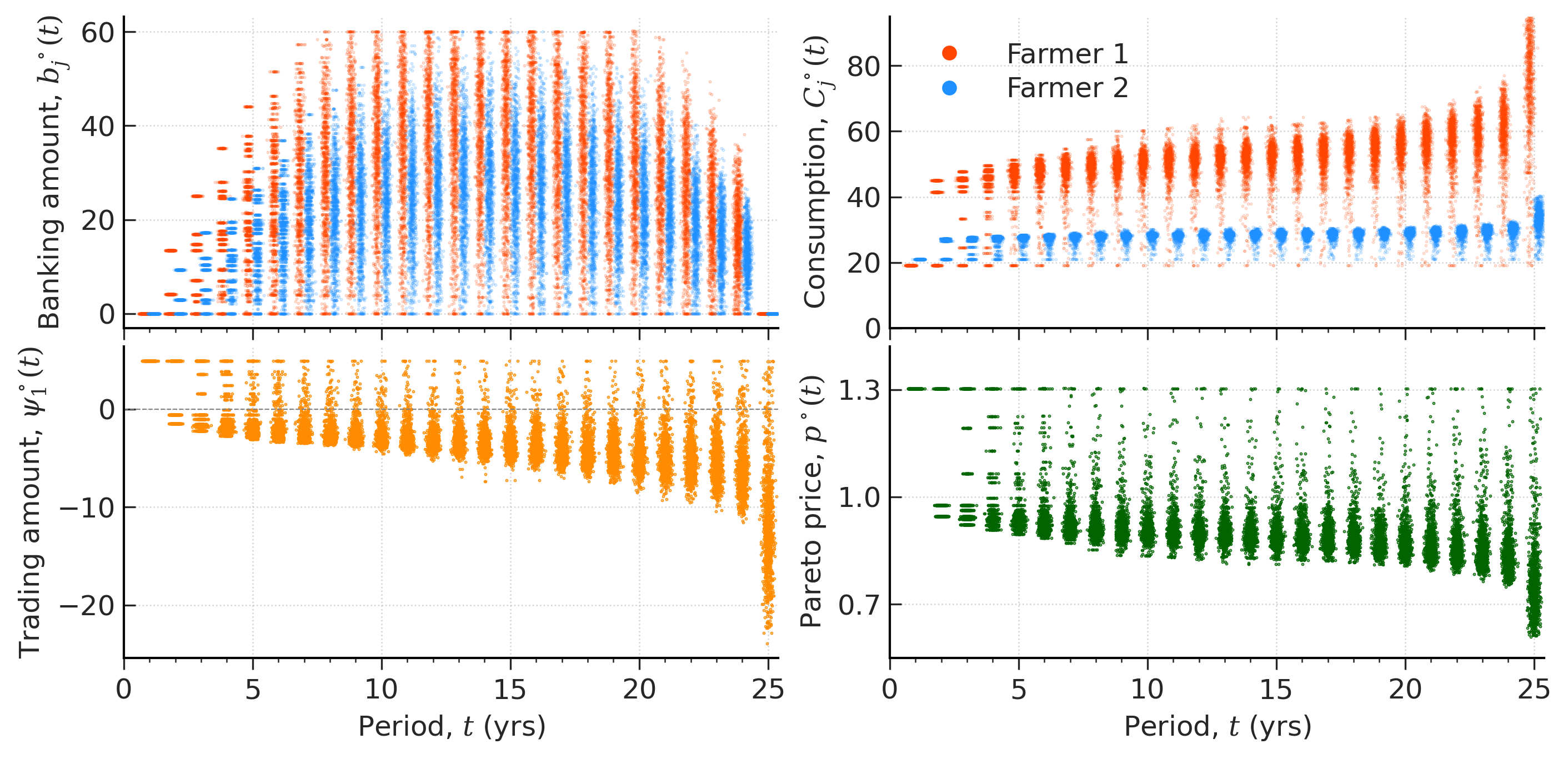}
\caption{Empirical distributions of modeled quantities across the two farmers for the case study of Section \ref{sec:illustrate}. \emph{Top Left:} equilibrium banking amounts $b^\circ_j(t)$; \emph{Top Right:} consumption quantities $C^\circ_j(t)$; \emph{Bottom Right:}
groundwater Pareto equilibrium price $p^\circ(t)$; \emph{Bottom Left:} trading amounts $\psi^\circ_1(t)$. All plots are based on 1000 randomly drawn trajectories of $\{R(t)\}, t=1,\ldots, 25$. \label{fig:Banking-price}}
\end{figure}

To understand the joint behavior of the two farmers, Figure \ref{fig:joint} shows the distribution of $C^\circ_2(t)$ vs $C^\circ_1(t)$ (as well as of $b^\circ_2(t)$ versus $b^\circ_1(t)$) for three representative periods $t$. For the banking amounts, there is a strong correlation for earlier periods, cf.~$t=6$, and less correlation as $t$ increases, due to the path-dependent features that emerge intrinsically in our setup, in particular due to the banking caps. In contrast, there is a fixed dependence between equilibrium pumping $C^\circ_2(t)$ and $C^\circ_1(t)$ based on the (time-stationary in our setup) split of available water between the two agents. Indeed, given total consumption $\cW$, agents trade among themselves to align their $C^\circ_1(t), C^\circ_2(t)$ based on their relative productivity and profitability. 

\begin{figure}[!htb]
\centering
\includegraphics[width=0.475\textwidth]{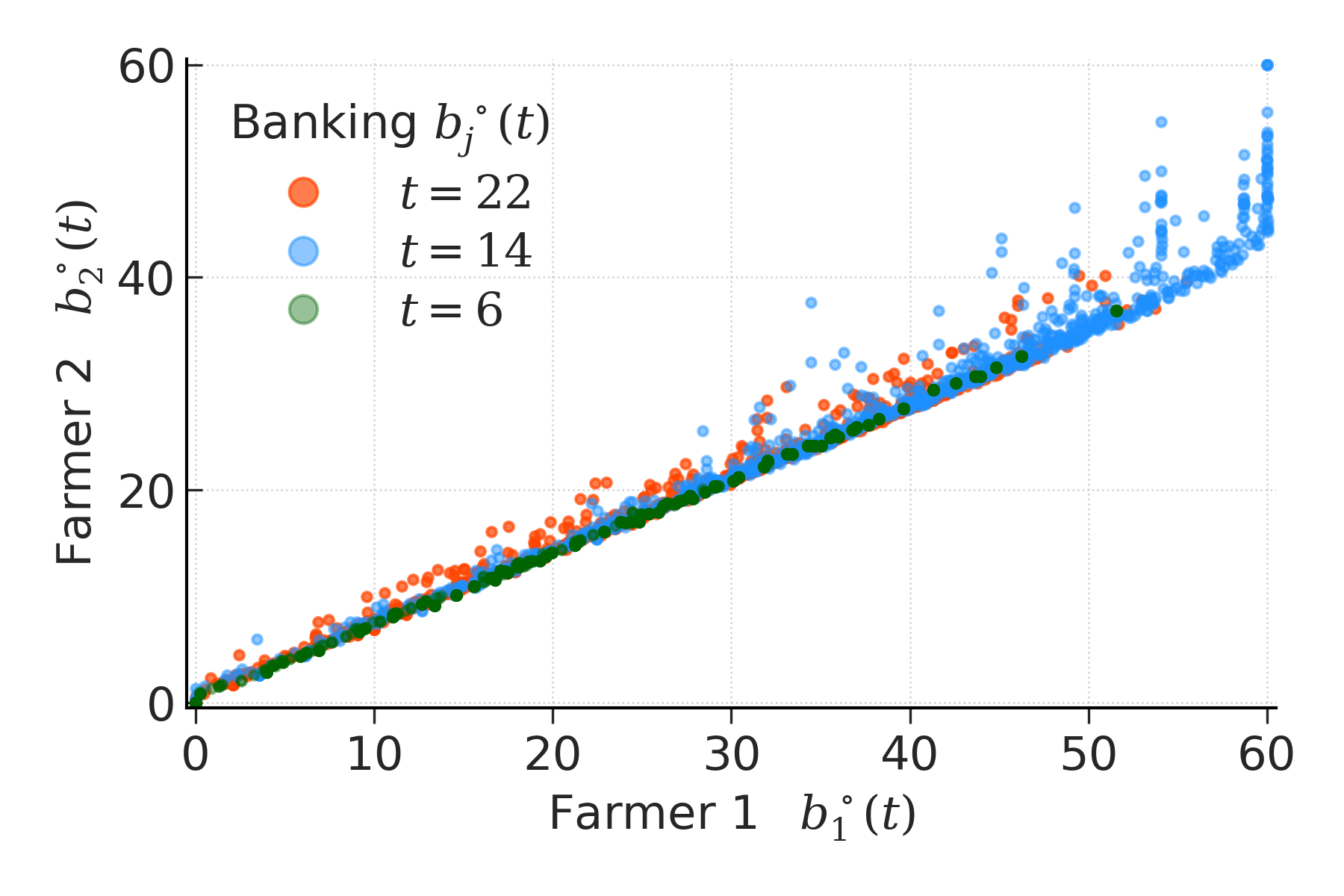}
\includegraphics[width=0.475\textwidth]{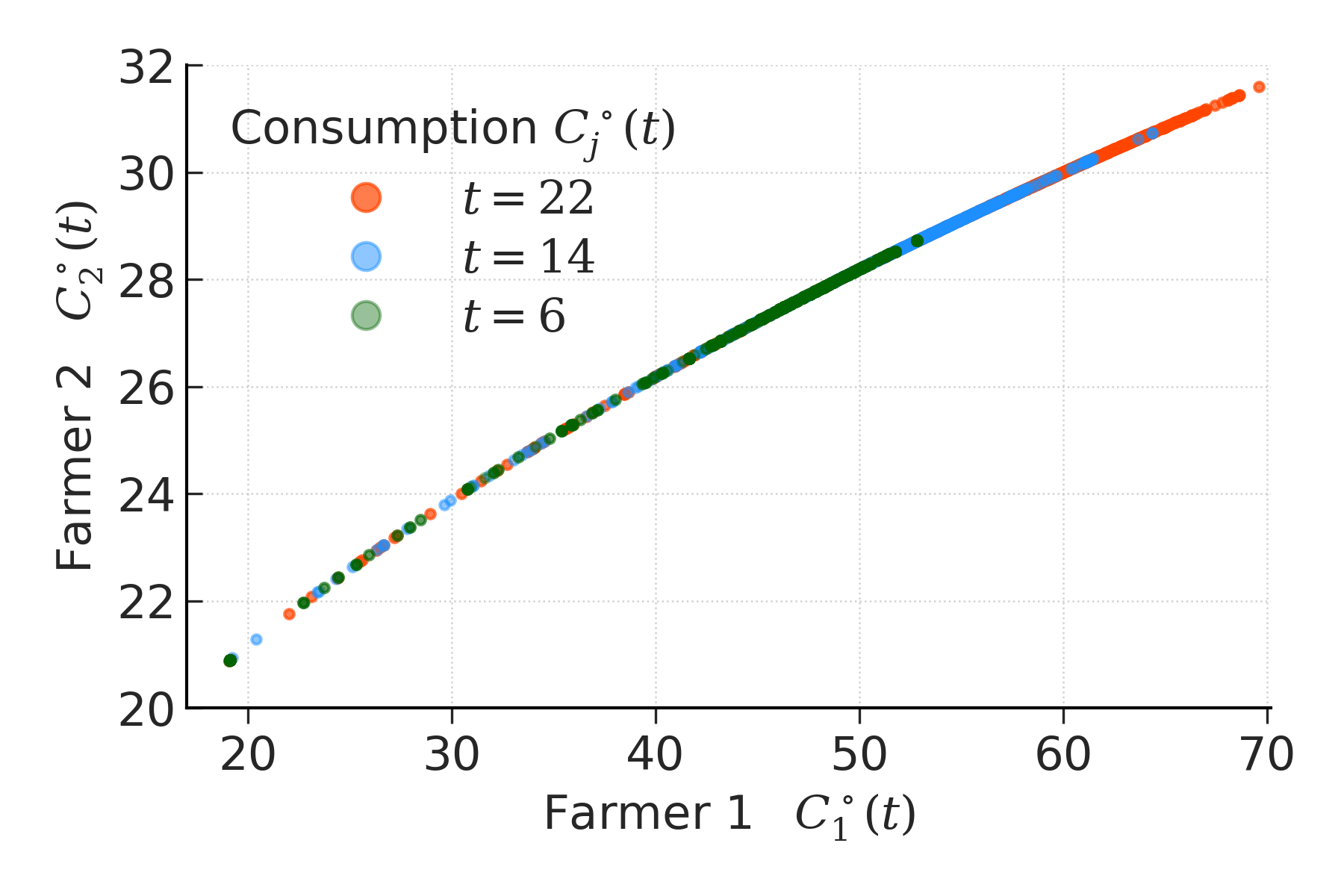}
\caption{Left: empirical distributions of equilibrium $b^\circ_2(t)$ vs $b^\circ_1(t)$; Right: empirical distribution of equilibrium pumping amounts $C^\circ_2(t)$ vs $C^\circ_1(t)$, at $t=6, 14, 22$. Both plots are based on 1000 randomly drawn trajectories of $\{R(t)\}$. \label{fig:joint}}
\end{figure}

\subsection{Gains from Trade and from Banking}

Groundwater banking allows agents to smooth out the stochastic fluctuations in water allocations, while trading allows them to leverage comparative advantages in productivity. We can tease out the relative value of these two effects by constraining the respective markets. In the first case, we can allow agents to individually bank their water rights across time but do not give them the option to trade. In the second case, we allow trading but no banking -- equivalent to a sequence of one-period problems, already studied in \cite{CialencoLudkovski2025}. 

Figure~\ref{fig:profit-distrib} compares the empirical distribution of one-step profits $v_j(\bfW^\circ(t))$ relative to the base situation where banking is not allowed. In the latter case, farmers must consume whatever is allocated, leading to net profit of $v_j(\mathbf{R}(t))$; since $R(t)$ takes just three values this leads to a discrete distribution with a 3-point support, namely $v_j(\Theta_j(r_m)), m=1,\ldots,M$ with respective probabilities equal to the stationary distribution $\vec{\pi}$ of $\{ R(t)\}$, see the red bars in the Figure~\ref{fig:profit-distrib}.  With banking, as already discussed in connection with Figure~\ref{fig:20-period}, farmers conserve and bank groundwater rights during wet years and spend down those carry-forwards during droughts. Their resulting profits now have a non-trivial distribution, concentrated in $v_1 \in [60,74]$ and $v_2 \in[52,59]$, which is generally more than during the medium-precipitation year and less than during the wet year. Three effects can be observed. First,  profit variability is dramatically reduced. Compared to the case of no banking, the standard deviation is reduced by 40\% for Farmer 1 and by 36\% for Farmer~2. Second, this reduction is primarily in the left-tail: the base drought-driven profits are almost never observed once banking is allowed: this happens just in 6 out of 1000 simulations. Indeed, the 95\% profit quantile is 60.05 for Farmer 1 and 50.02 for Farmer 2, whereas without banking there was $\pi_1 = 1/9 = 11.1\%$ of earning less than 40. Nevertheless, the long left tail, caused by the possibility if multiple consecutive dry years, remains. Third, the precautionary motive is driven by risk-aversion: in the base case the strongly concave log-utility creates a tighter and less-skewed profit distribution compared to risk-neutral farmers.

\begin{figure}[!htb]
\centering
\includegraphics[width=0.9\textwidth]{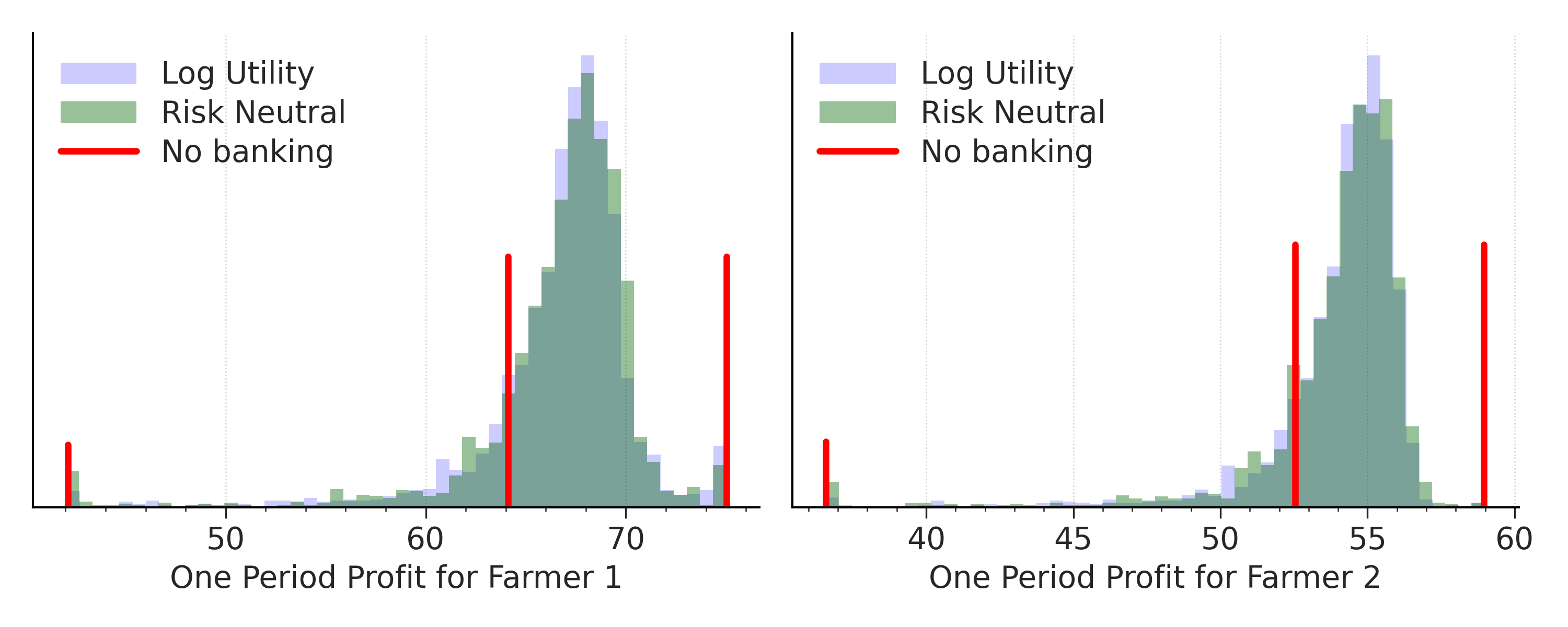}
\caption{Distribution of annual profits of Farmer 1  and 2, $v_j(t,\bfW^\circ(t))$, based on 1000 trajectories of $\{R(t)\}$ and averaged across $t \in\{10,11,12\}$. The red bars represent the distribution of profits in a market where no banking is possible. \label{fig:profit-distrib}}
\end{figure}

Another useful comparison is relative to the model with a single agent. First, this captures the possibility of individual farmers who can bank but cannot trade with each other, so that multiple single agents operate in parallel, without any interaction among them. Second, a monopolist is relevant in the context of a central planner. The central planner optimizes aggregate welfare, bypassing all and any market frictions and the cost of the non-cooperative strategies. 

Specifically, the single agent problem is the one-dimensional version of \eqref{eq:farmer1-2period}. Each period, farmer $j$ optimizes their period-$t$ banking amount $\beta_j$ (with the hard period-$t$ water budget constraint of $C_j(t) = w- \beta_j$) to maximize total expected profits 
\begin{align} \label{eq:1agent}
    J_j(t, w, r) := \sup_{\beta_j \in [0, w-\underline{C}]} \Big\{ U_j(G_j(t,w- \beta_j)) + e^{-\delta}\bE \left[ J_j(t+1, R_j(t+1) + \beta_j, R(t+1)) \, |\ R(t)=r \right] \Big\}.
\end{align}
To solve \eqref{eq:1agent} we initialize with 
$$\widehat{J}_j(T,w,r) = U_j(G_j(t,w))$$
and recursively (as $t=T-1, T-2,\ldots$) construct one-dimensional surrogates (such as a cubic spline) $w \mapsto \widehat{J}_j(t,w,r)$ which are trained by numerically solving \eqref{eq:1agent} on a grid of training allocations and in parallel for different $r$'s. 

In the absence of the market, Farmer 2 no longer has the option to buy rights from Farmer 1 during droughts; as a result the variance of her consumption goes up significantly (and the variance of $C^\circ_1(t)$ goes down). As a further consequence, Farmer 2 needs bigger reserves to control her downside risk of losses. We therefore find that $\bE[ \beta_2(t) ] > \bE[ b^\circ_2(t)]$ (on the order of several percent), while there is minimal change to the banking strategy of Farmer 1, see last row of Table \ref{tab:ave-banking}. This reveals asymmetric benefit of market access for different stakeholders.

\section{Comparative Statics}\label{sec:statics}

To illustrate how the agents adjust their behavior relative to the market and rainfall parameters,  in this section we provide several comparative analyses that tease out the impact of different model parameters. As our base setup we use the above case study. Beyond the figures below, Table~\ref{tab:ave-banking} summarizes the mean and standard deviation of banking amounts, one-period (risk-unadjusted) profits and groundwater prices across the range of experiments considered. To capture the long-run steady-state market conditions, we average across $t \in\{10,11,12\}$ always using a horizon of $T=25$ years. 
\begin{table} \small
$$\begin{array}{rrrrrr} \hline 
\text{Case} & \bE[p^\circ(t)] &  \bE[b_1^\circ(t)] &  \bE[b^\circ_2(t)] & \bE[v_1(\bfw^\circ(t))] & \bE[v_2(\bfw^\circ(t))] \\ \hline  
\text{Base} &  0.90 \ (0.052) & 36.39 \ (15.127) & 26.65 \ (11.621) & 66.84 \ (4.078) & 54.13 \ (2.442) \\
\bfQ=\bfQ_2 & 0.90 \ (0.064) & 33.26 \ (17.475) & 25.71 \ (14.125) & 66.66 \ (4.888) & 53.93 \ (2.890) \\
\bfQ=\bfQ_3 & 0.90 \ (0.037) & 32.32 \ (14.761) & 22.86 \ (10.740) & 66.91 \ (3.111) & 54.20 \ (1.814) \\ \hline
U_1(x)= U_2(x) = \sqrt{x} & 0.90 \ (0.056) & 33.32 \ (15.528) & 24.50 \ (11.627) & 66.82 \ (4.182) & 54.11 \ (2.586) \\
U_1(x)=U_2(x)=x & 0.90 \ (0.060) & 31.68 \ (15.781) & 23.96 \ (11.994) & 66.76 \ (4.408) & 54.04 \ (2.729) \\
U_1=\tilde{u}, U_2=\textrm{Id} & 0.90 \ (0.063) & 36.24 \ (15.593) & 23.07 \ (12.232) & 66.70 \ (4.533) & 54.09 \ (3.016) \\
\rq^{(2)} =[1/3, 1/3, 1/3] & 0.97 \ (0.063) & 33.77 \ (16.761) & 27.93 \ (14.567) & 61.16 \ (4.925) & 50.57 \ (2.641) \\
\rq^{(3)} = [4/9, 4/9, 1/9] & 1.04 \ (0.064) & 28.78 \ (16.540) & 22.84 \ (13.876) & 56.01 \ (4.433) & 47.16 \ (2.593) \\
R(t) \in \{36, 71, 100\} 
& 0.90 \ (0.063) & 38.70 \ (15.907) & 31.79 \ (14.099) & 67.06 \ (5.151) & 54.14 \ (2.771) \\
\hline
\Theta = [0.4, 0.6] & 0.90 \ (0.054) & 26.59 \ (11.834) & 35.96 \ (15.484) & 52.38 \ (3.289) & 68.32 \ (3.368) \\
\text{Senior/Junior} & 0.90 \ (0.065) & 17.74 \ (7.842) & 39.47 \ (15.915) & 62.66 \ (2.570) & 57.89 \ (5.884) \\
\hline 
\delta=0.02 & 0.90 \ (0.069) & 22.55 \ (11.688) & 16.32 \ (8.045) & 66.90 \ (5.040) & 54.16 \ (3.202) \\
\delta=0.04 & 0.90 \ (0.084) & 13.89 \ (8.481) & 10.49 \ (5.952) & 66.77 \ (6.021) & 54.05 \ (3.843) \\
h=0.05 & 0.90 \ (0.090) & 13.01 \ (8.340) & 10.06 \ (5.946) & 66.67 \ (6.418) & 53.97 \ (4.065) \\ \hline
\bar{b}_1=0 & 0.90 \ (0.085) & 0.00 \ (0.000) & 30.62 \ (13.412)   & 66.41 \ (9.897) &  54.17 \ (2.290) \\
\bar{b}_2=0 & 0.90 \ (0.078)  & 36.60 \ (15.301) &  0.00 \ (0.000) & 66.90 \ (4.025) &  53.75 \ (6.581)   \\
\bar{b}_2=10 & 0.90 \ (0.073) & 36.25 \ (16.140) & 7.21 \ (3.129) & 66.71 \ (4.454) & 54.01 \ (4.865) \\
\text{No trading} & - & 35.85\ (15.624) & 27.81\ (12.142 ) & 66.57\ (4.468) & 53.66\ (2.596)  \\
\hline
\end{array}$$
\caption{Average groundwater price $p^\circ$ and one-period profits across considered market setups. The base case uses $\bfQ = \bfQ_1, R(t) \in \{40, 75, 95\}$, $U_1(x)=U_2(x) = \tilde{u}(x) =\log(x-20), \Theta = [0.6, 0.4], \delta=0, h=0, \bar{b}_1 = \bar{b}_2 = 60$. Values in parentheses indicate the respective standard deviations across 1000 scenarios. \label{tab:ave-banking}}
\end{table}

\subsection{Environmental Uncertainty}

Stochastic groundwater allocations, driven by the random groundwater recharge, is a central piece of our modeling framework. To study the resulting impacts, we adjust different components that affect the distribution of $R(t)$. 

\textbf{Recharge chain:} we first consider modifying the transition matrix $\bfQ$ that governs the serial correlation in $\{R(t)\}$. To do so, we fix the stationary distribution $\vec{\pi}_R=[1/9, 4/9, 4/9]$ which represents the long-run groundwater availability and vary the persistence of the current regime. We consider the following two alternatives to $\mathbf{Q}_1$:
\begin{align}\label{eq:Q-cases}
\mathbf{Q}_2 = \begin{bmatrix} 1/2 & 1/2 & 0 \\
                        1/72 & 47/72 & 1/3 \\
                        1/9 & 2/9 & 2/3 \end{bmatrix}, \qquad 
\bfQ_3 = \begin{bmatrix} 1/9 & 4/9 & 4/9 \\ 
                        1/9 & 4/9 & 4/9 \\
                        1/9 & 4/9 & 4/9 \end{bmatrix}.
\end{align}
In the latter case with $\bfQ_3$, there is no memory and precipitation regimes are i.i.d.~across years. Therefore the current year's regime does not impact the distribution of next year's recharge and $V^\circ_j(t,\bfw,r)$ is independent of $r$. This allows to quantify the impact of auto-correlation in $\{R(t)\}$. In the second case, $\mathbf{Q}_2$ is chosen such that $\{R(t)\}$ exhibits a strong persistence: in particular a dry year $R(t) =40$  is likely to be followed by another dry year, while a medium-rainfall year is likely to be followed by another medium year. Note that in the base case with $\bfQ_1$, next-year recharge distribution is identical in states 2 and 3 (hence will lead to identical banking decisions), but there is more persistence in the dry state, making consecutive droughts more likely. Hence the baseline $\bfQ_1$ presents an intermediate situation with moderate  serial correlation.

Lack of persistence discourages banking, as there is less risk of multi-year droughts, hence under $\mathbf{Q}_3$ equilibrium banking quantities are shifted down (especially noticeable for Farmer 2). In parallel, the variability of profits $\textrm{StDev}(v_j(\bfW^\circ(t))$ drops significantly (by 28\%) under $\mathbf{Q}_3$ compared to the base case. Conversely, the more persistent system with $\mathbf{Q}_2$ creates more ``feast-famine'' situations where carryover is either exhausted or maxed out, raising the standard deviation of $v_j(\bfW^\circ(t))$ by 17\%. In Figure \ref{fig:recharge-distrib}a we similarly observe that the variance of $b_j^\circ(t)$ increases under $\mathbf{Q}_2$ and agents become more sensitive to the current regime (cf.~more separation of colors that represent $R(t)$ state: red=dry, yellow=medium, gray=wet).  Because a drought is very unlikely after a medium year under $\mathbf{Q}_2$, corresponding banking in average rainfall years drops, while there is relatively more banking in dry years.

\begin{figure}[!htb]
\centering
\includegraphics[width=0.49\textwidth]{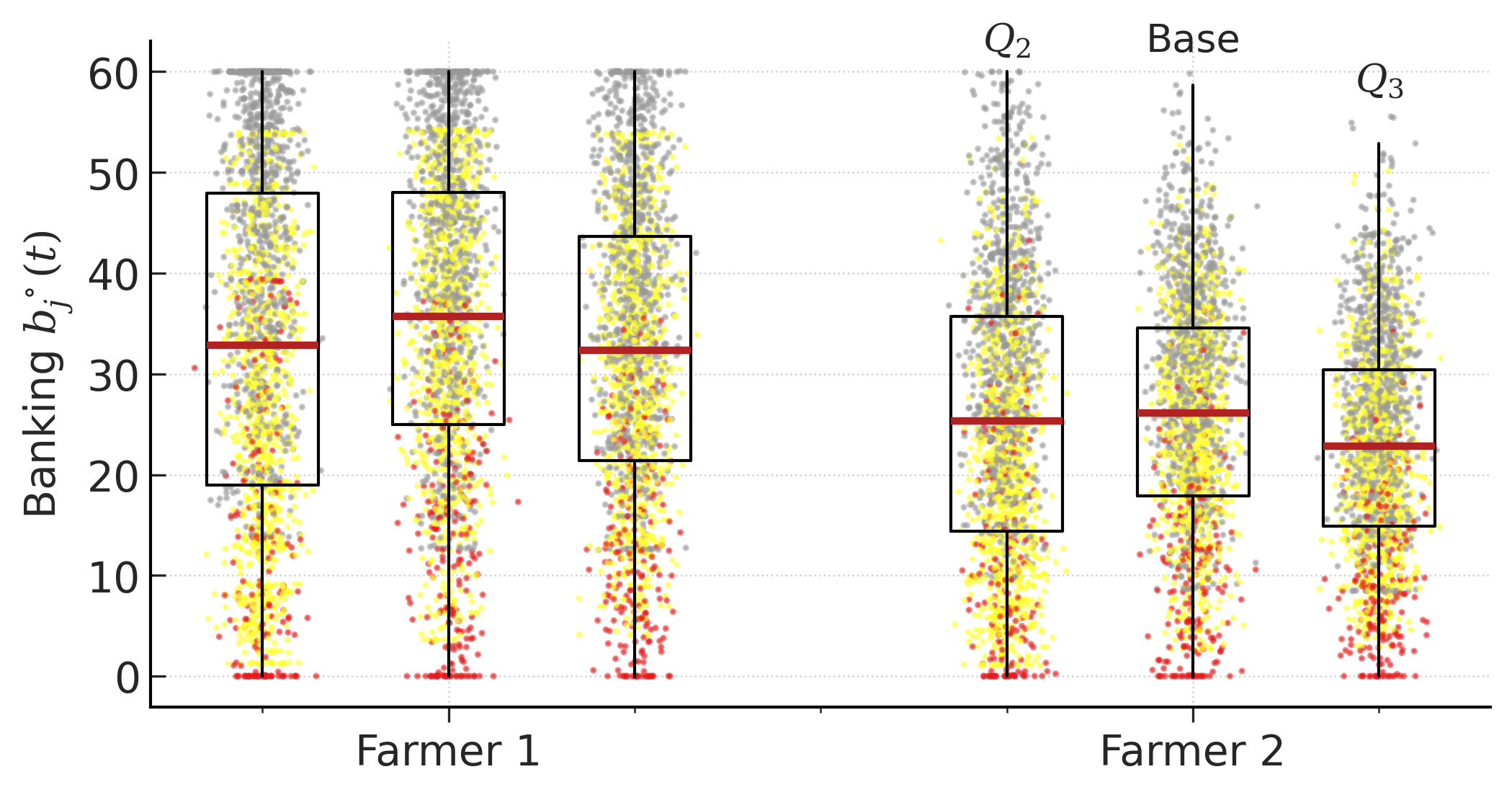}
\includegraphics[width=0.49\textwidth]{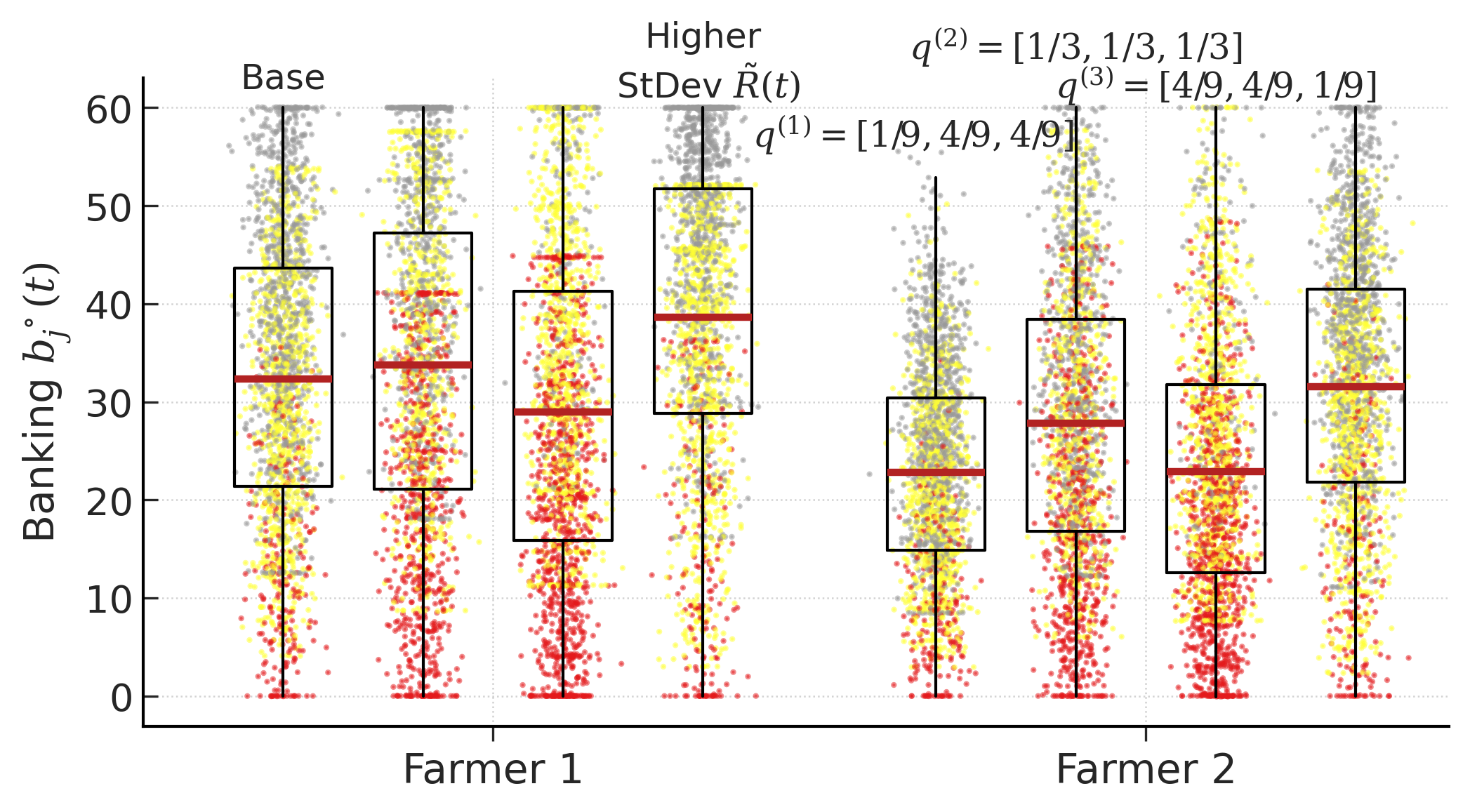}
\caption{Distribution of equilibrium banking $b^\circ_j(t)$, color-coded by the regime $R(t)$. The boxplots approximate the steady-state market conditions (average across $t \in\{10,11,12\}$); colors indicate the current recharge regime (red=dry, yellow=medium, grey=wet).  \emph{Left panel (a):}  for different $\bfQ$-matrices from \eqref{eq:Q-cases}. \emph{Right (b):} for different recharge distributions $\rq$ postulating i.i.d.~$\{R(t)\}$. \label{fig:recharge-distrib}}
\end{figure}

\textbf{Recharge Distribution:} To isolate the impact of climate, we consider the setting of i.i.d $\{R(t)\}$ and then modify the  relative frequency of the 3 weather regimes. The base case is $\rq^{(1)} = \vec{\pi}_R = [1/9, 4/9, 4/9]$ with average recharge of 80 ac-ft. In the right panel of Figure \ref{fig:recharge-distrib} we also consider $\rq^{(2)} = [1/3, 1/3, 1/3]$ (drier climate with average recharge of $\bE[R(t)]=70$ ac-ft) and $\rq^{(3)} = [4/9, 4/9, 1/9]$ (very dry climate with average recharge of 61.7 ac-ft). As droughts become more frequent and allocations are reduced, equilibrium prices rise and their variability $\textrm{StDev}(p^\circ(t))$ increases. We observe average price increase of 16\% and standard deviation increasing by 89\% comparing $\rq^{(3)}$ to $\rq^{(1)}$, cf. Table \ref{tab:ave-banking}. Under very dry climate, farmers strongly curtail consumption during dry years, slowly drawing down reserves to maintain some buffer against the possibility of multi-year shortages, cf.~many scenarios where $b^\circ_j(t)>30$  in the dry regime under $\rq^{(3)}$ in Figure \ref{fig:recharge-distrib}b.

A different effect is obtained if we modify the values $r_m$ that $R(t)$ can take on, keeping the respective probabilities $\rq$ fixed. This can be interpreted as a distributional shift in precipitation and groundwater recharge, rather than a change in the frequency of droughts. Many climate projections for Western US suggest that dry years will get dryer and wet years will get wetter, increasing the range of annual precipitation. To capture the impact of such a change, we consider 
\begin{align}\label{eq:R-dist-2}
\tilde{R}(t) = \left\{ \begin{aligned}
    36 & & \text{with prob. } \rq_1= 1/9;\\
     71 & & \text{with prob. }\rq_2=4/9;\\
      100 & & \text{with prob. } \rq_3=4/9,
\end{aligned}  \right. \end{align}
which has the same $\bE[\tilde{R}(t)] = 80$ as our base case, but higher variance, in particular droughts are now 10\% more severe. The respective banking distributions are in the right-most column of Figure \ref{fig:recharge-distrib}b and demonstrate that more variable recharge raises the intertemporal smoothing motive and farmers tend to carry higher reserves.

\subsection{Groundwater Rights Allocations}

\begin{figure}[!htb]
\includegraphics[width=0.49\textwidth]{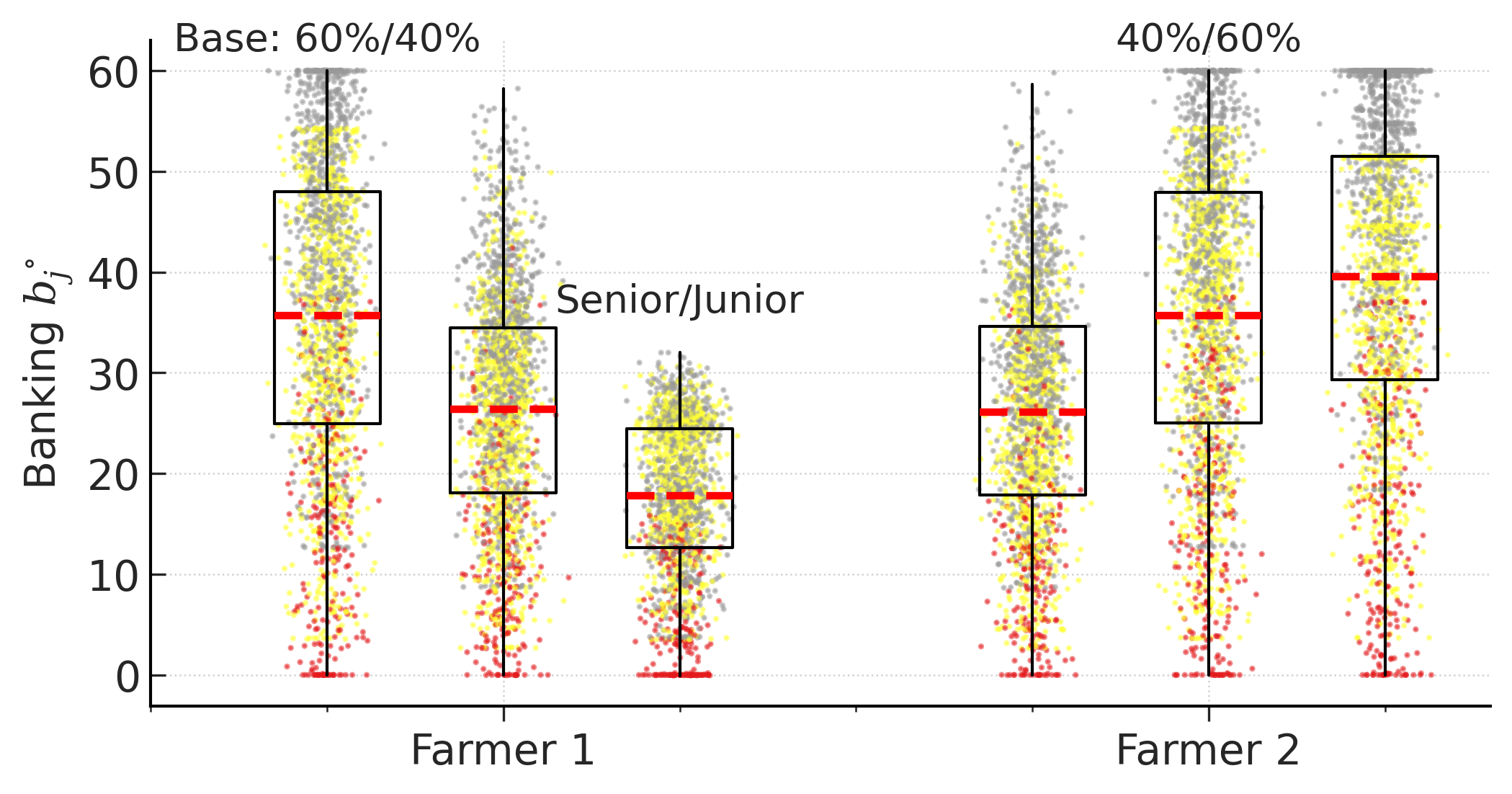}
\includegraphics[width=0.49\textwidth]{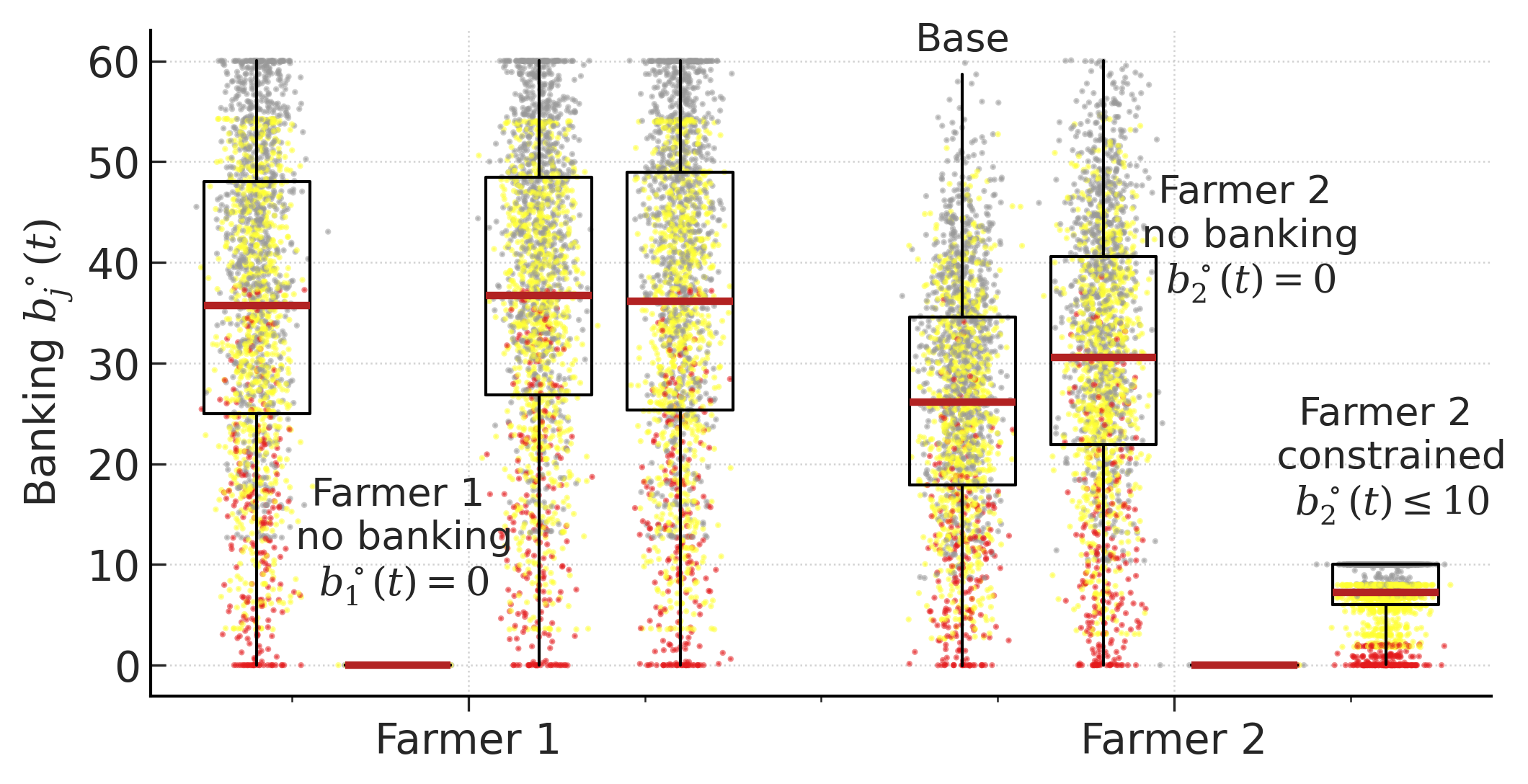}
\caption{Impact of allocation mechanisms $\Theta_j(r)$ on distribution of equilibrium banking $b^\circ_j(t)$. \emph{Left panel (a):} Three different allocation mechanisms. \emph{Right (b):} flexibility constraints. All plots based on 1000 simulated trajectories of $\{R(t)\}$, color-coded by the current regime, cf.~Figure \ref{fig:recharge-distrib}. \label{fig:alloc-impact}}
\end{figure}

Next, we analyze the impact of allocations. Fixing the distribution of $R(t)$ and the base case of (a) ``60/40'' allocation: $\Theta_1(t) = \theta_1 R(t) \in \{24, 45, 57\}, \Theta_2(t) = \theta_2 R(t) \in \{16, 30, 38\}$, we consider the alternatives of (b) a ``40/60'' allocation where Farmer 2 gets the majority of water: $\theta_1 R(t) \in  \{16, 30, 38\}, \theta_2 R(t) = \{24, 45, 57\}$
and a (c) ``Senior/Junior'' allocation scheme  where Farmer 1 is guaranteed at least 30 ac-ft, and hence gets nearly all the water during droughts but does not benefit from a wet year, $\Theta_1(t) = \{30, 45, 45\}$ while the junior Farmer~2 has a highly variable allocation $\Theta_2(t) \in \{10, 30, 50\}$, absorbing nearly all the fluctuations in consumption. The left panel of Figure~\ref{fig:alloc-impact} shows that changing who gets the water drastically impacts who is banking their water rights. Intuitively, with total available water fixed, the best use of water is invariant to the allocation. If Farmer~1 is getting more water then she will bank more and Farmer~2 will bank less. In the ``40/60'' allocation, Farmer 2 becomes the main ``banker''. Similarly, if the variability of $\Theta_1(t)$ is reduced (case c), then Farmer 1 has less incentives to bank, while the ``Junior'' Farmer 2 will bank a lot in order to smooth out her extreme fluctuations. The shift in the allocation also induces a shift in profits -- when Farmer 2 gets a larger fraction of supply, their average profit goes up and that of Farmer 1 goes down, cf.~Table \ref{tab:ave-banking}.

\textbf{Consumption Flexibility:} In the base case, both farmers have the same flexibility and only differ in their profit functions. To get insights into the value of flexibility we consider the situation where only one out of the two agents is allowed to bank. In that case, there is no banking equilibrium to determine: for instance if only Farmer 1 can bank, then she can determine her optimal $b^\circ_1(t,\bfw,r)$ as a direct optimization, similar to \eqref{eq:1agent}. Conditional on $b^\circ_1$, the resulting current-period profits of both Farmers, and trading between them, are determined from the 1-period equilibrium. The right panel of Figure \ref{fig:alloc-impact} shows the impact of banking restrictions is asymmetric. If Farmer 1 cannot bank, then Farmer 2 will bank even more to compensate. However, if Farmer 2 cannot bank, then Farmer 1 does not really modify her behavior. We moreover see that if Farmer 2 has a low upper cap on carry-forward banked rights, then she will often draw down her reserves, in particular use up her entire bank during a dry year.

\subsection{Agent Preferences}

A third aspect that drives stakeholder actions concerns their risk and intertemporal preferences. Since banking is inherently linked to trading off present and future (uncertain) consumption, risk-aversion and intertemporal discounting strongly impact the choice of how much pumping to forgo today and carry forward into the next period. 

\begin{figure}[!htb]
\centering
\includegraphics[width=0.49\textwidth]{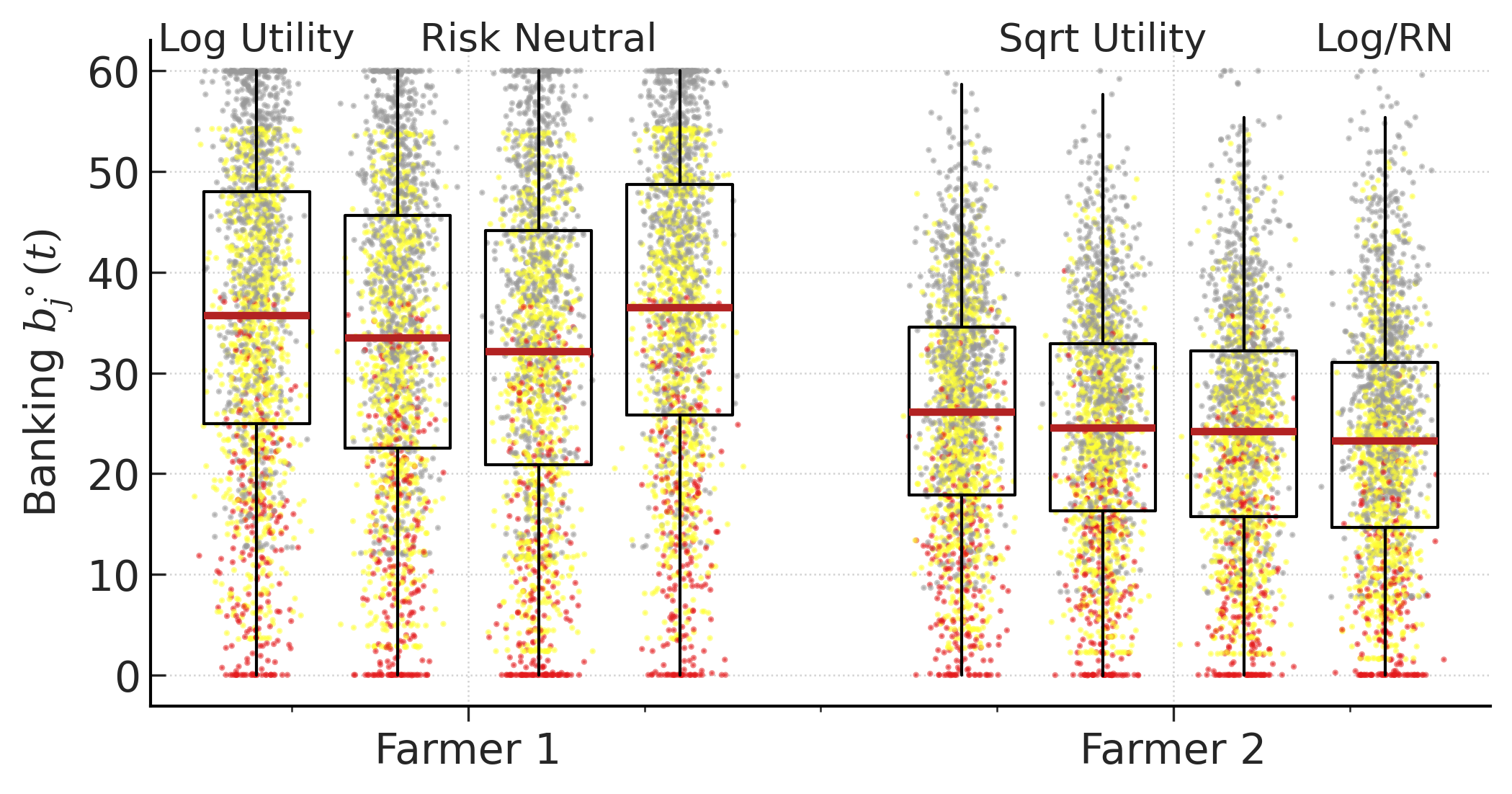}
\includegraphics[width=0.49\textwidth]{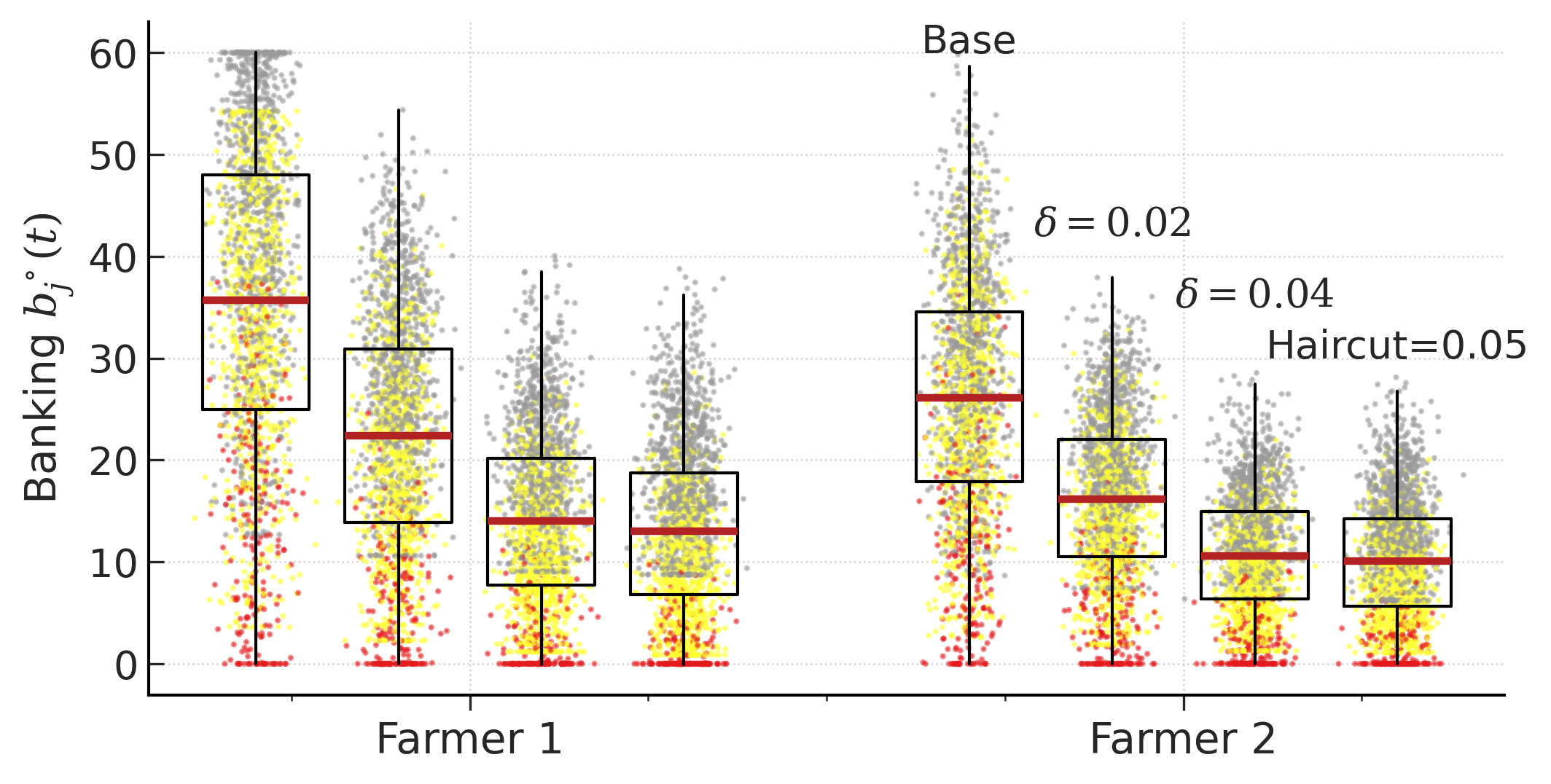}
\caption{Impact of agent preferences on equilibrium banking $b^\circ_j(t)$. \emph{Left panel (a):} different utility functions.  \emph{Right panel (b):} different discount factors $\delta$ and banking tax $h=0.05$ in \eqref{eq:haircut}. The Base case uses $\delta=0$ and $U_1(x) = U_2(x) = \log(x-20)$. All plots based on 1000 simulated trajectories of $\{R(t)\}$. \label{fig:utility-impact}}
\end{figure}

\textbf{Risk aversion:} The left panel of Figure \ref{fig:utility-impact} visualizes the role of the utility functions $U_j(\ell)$. We compare our base case with log-utility to two other settings where agents are more risk tolerant, namely (i) both agents having $U_j(x) = \sqrt{x}$ square-root power utility and (ii) both agents being risk-neutral $U_j(x) = x$. Risk-aversion is one of the key causal mechanisms for banking, amplifying the losses due to lower consumption in droughts, which motivates precautionary saving of rights. Thus,  as expected, lower risk-aversion progressively reduces banking relative to the base case as we assume square-root and linear utility. As a final comparison, we take $U_1(x) = \log(x -20)$ and $U_2(x)=x$ to capture risk aversion heterogeneity, see rightmost column in Figure \ref{fig:utility-impact}a. In the latter situation we observe a relative shift in banking with the more risk-averse Farmer 1 banking 50\% more than the risk-neutral Farmer 2.

\textbf{Intertemporal discount rate:} Another factor that controls the benefit of carry forward balances is discounting: agents that are fully myopic and solely care about the present have no reason to bank. Thus, higher discount rates $\delta$ will induce progressively less banking. This is illustrated in the right panel of Figure \ref{fig:utility-impact} where we compare the base case with no discounting $\delta=0$ against $\delta=0.02$ and $\delta=0.04$. In the latter case, average amount of banking is almost halved. In particular, discounting strongly discourages building up large reserves which would take several years to wind down. While in the base case Farmer 1 carries over 40 ac-ft of reserves more than 40\% of the time, with $\delta=0.02$ she does so only  with probability of 7\% and with $\delta=0.04$, she never does so, never accumulating more than 39 ac-ft of reserves.

\textbf{Tax on banking:} due to lateral groundwater flows and interbasin connections, regulators often impose a ``haircut'' that applies a proportional tax of $h$\% on carryover amounts:
\begin{multline}\label{eq:haircut}
      V^\circ_j(t,\bfw, r):=   v_j \left(\bfw - \bfb^\circ(t,\bfw,r) 
    \right) \\ + e^{-\delta} \cdot \bE \left[ V^\circ_j \left(t+1, 
    \bfR(t+1) + (1-h)\bfb^\circ_1(t,\bfw,r), R(t+1) \right) \Big| \ R(t) = r\right].
\end{multline}
We illustrate the effect of imposing a 5\% haircut on banked water in the rightmost column of Figure~\ref{fig:utility-impact}b. As expected, such tax dramatically lowers the benefits of carryover, with impact similar to a discount factor.

\section{Calibrated Case Study}\label{sec:calilbration}

Finally, we present a setup roughly calibrated to real-life conditions, namely the San Joaquin river region of California's Central Valley. This is one of the principal agricultural regions in the U.S., where farming is one of the major regional economic engines and groundwater pumping is extensively utilized to supplement precipitation and surface water irrigation.

California climate features well-defined rainy and dry seasons, with nearly all measurable precipitation taking place during October-April. This seasonality simplifies the definition of the precipitation regime process which is based on the aggregate rain amounts during the winter. Using historical hydrological data, a 3-state Markov chain has been calibrated in \cite{Inston2025} to capture the annual precipitation pattern in the San Joaquin river hydrologic region. Next, using the C2VSimFG \cite{C2V} tool from the California Department of Water Resources (which is also used in the aforementioned CALVIN optimization framework), we ran a linear regression to estimate how precipitation maps to groundwater recharge $R_{SJ}(t)$. This leads to the state space with respective transition matrix
\begin{align}\label{eq:SJ}
R_{SJ}(t) \in \{ 2000, 2800, 3500\}, \qquad
\bfQ_{SJ} = \begin{bmatrix} 
0.20 & 0.28 & 0.52 \\
0.21 & 0.36 & 0.43 \\
0.11 & 0.13 & 0.76 \\ 
\end{bmatrix}.
\end{align} 
Above $R_{SJ}(t)$ is in 1000s of ac-ft and represents recharge regime across the entire basin. Since actual recharge has a continuous distribution, following \cite{Inston2025} we take $R_j(t)|R_{SJ}(t)=r$ to be log-normal, with the matching mean and standard deviation of 8\%. For numerical purposes, the respective expectation in  \eqref{eq:quantized-expectation} is implemented as an additional inner sum over $\ell$ (we use 5 terms)
\begin{multline*}
\bE \left[ \hat{V}_j \left(t+1, 
    \bfR(t+1) + \bfb^\circ(t,\bfw,r), R(t+1) \right) \Big| \ R(t) = r_k\right] =  \\ \sum_{m=1}^M \bfQ_{k,m} \sum_\ell e_\ell \widehat{V}_j \left(t+1, \Theta(r_{m,\ell})+ \bfb^\circ(t,\bfw,r), r_m\right),
\end{multline*}
where $e_\ell$ are the Gaussian quantization weights and $r_{m,\ell}$ are the quantized values of the log-normal distribution with mean $r_m$.

Currently none of GSAs in San Joaquin basin implement marketplaces for trading pumping rights, however several agencies impose quotas and other allocation mechanism to manage overdrafts.  To outline the potential groundwater market, we consider two types of farming stakeholders: tree orchards and forage. Specifically, we consider olives and almonds which are widely cultivated in the San Joaquin basin and require significant and regular irrigation, as well as alfalfa that is seasonally planted. 
We use yield values reported by \cite{SearsEtAl2019} who calibrated a power-type profit function of the form \eqref{eq:G}. Specifically, they interpret all quantities on a per-acre basis, with variable irrigation quantity $\phi^k_j(t)$ representing the amount of water applied to an existing productive patch that grows crop $k$. Irrigation is interpreted relative to a baseline level $\bar{\phi}^k$, with $\phi^k_j(t) < \bar{\phi}^k$ corresponding to ``deficit irrigation'', whereby the patch is irrigated less than needed, producing a smaller harvest. The multiplier $\bar{f}^k$ in \eqref{eq:G} is the dollar value of harvested yield per acre and $q^k$ is the proportional cost of pumping and irrigating crop $k$. The values below are in \$100's per acre with the denominators corresponding to the baseline $\bar{\phi}^k$'s of each crop with $\phi_j$ measured in ac-ft:
\begin{itemize}
    \item Alfalfa: $F_j(t,\phi_j) = 25 (\frac{ \phi_j}{5.3})^{0.3} - 1.3 \phi_j$;

    \item Almonds : $F_j(t,\phi_j) = 63 (\frac{\phi_j}{3.67})^{0.091} - 1.0 \phi_j$;

    \item Olives: $F_j(t, \phi_j) = 51 (\frac{\phi_j}{4.0})^{0.143} - 0.9 \phi_j$.
 
\end{itemize}
Note the relatively low elasticities of production $\alpha^k$ and the much higher profits from orchards compared to alfalfa. Above we omit any fixed costs, so $F_j$ should not be interpreted as net profit, only as net variable revenue. 

We assume that Farm 1 only grows alfalfa, while Farm 2 has equal acreage of olives and almonds and is double the size of Farm 1. Moreover, we assume fixed per-acre allocation, so that $R_j(t)$ is proportional to $R(t)$. To convert from total basin recharge above to per-acre quantities, we multiply by $1.2 \cdot 10^{-3}$. The San Joaquin basin has about 2.1M irrigated acres, however most of the irrigation is conjunctive with surface water supply. Given that this setup is fully proportional to acreage, we assume without loss of generality that Farm 1 consists of $A$ acres and Farm 2 of $2A$ acres, so that $R_2(t) \equiv 2 R_1(t)$.  Finally, we assume risk-neutral Farmers that use a discount rate of $\delta=3\%$ per year and that banking is capped at $5$ ac-ft for both stakeholders.

The values in \eqref{eq:SJ} imply that about 66.1\% of the years are wet, 19.7\% are medium and 14.2\% are dry, with average per-acre allocation of 3.78 ft, which is about 87\% of the nominal average baseline needs for the three crops considered. In particular, during wet years the allocation is 97\% of nominal, but it is only 55\% during droughts. Thus, farmers must engage in some deficit irrigation in order to build up reserves and mitigate droughts.

\begin{figure}[ht]
\includegraphics[width=0.73\textwidth,trim=0in 0.3in 0in 0.7in]{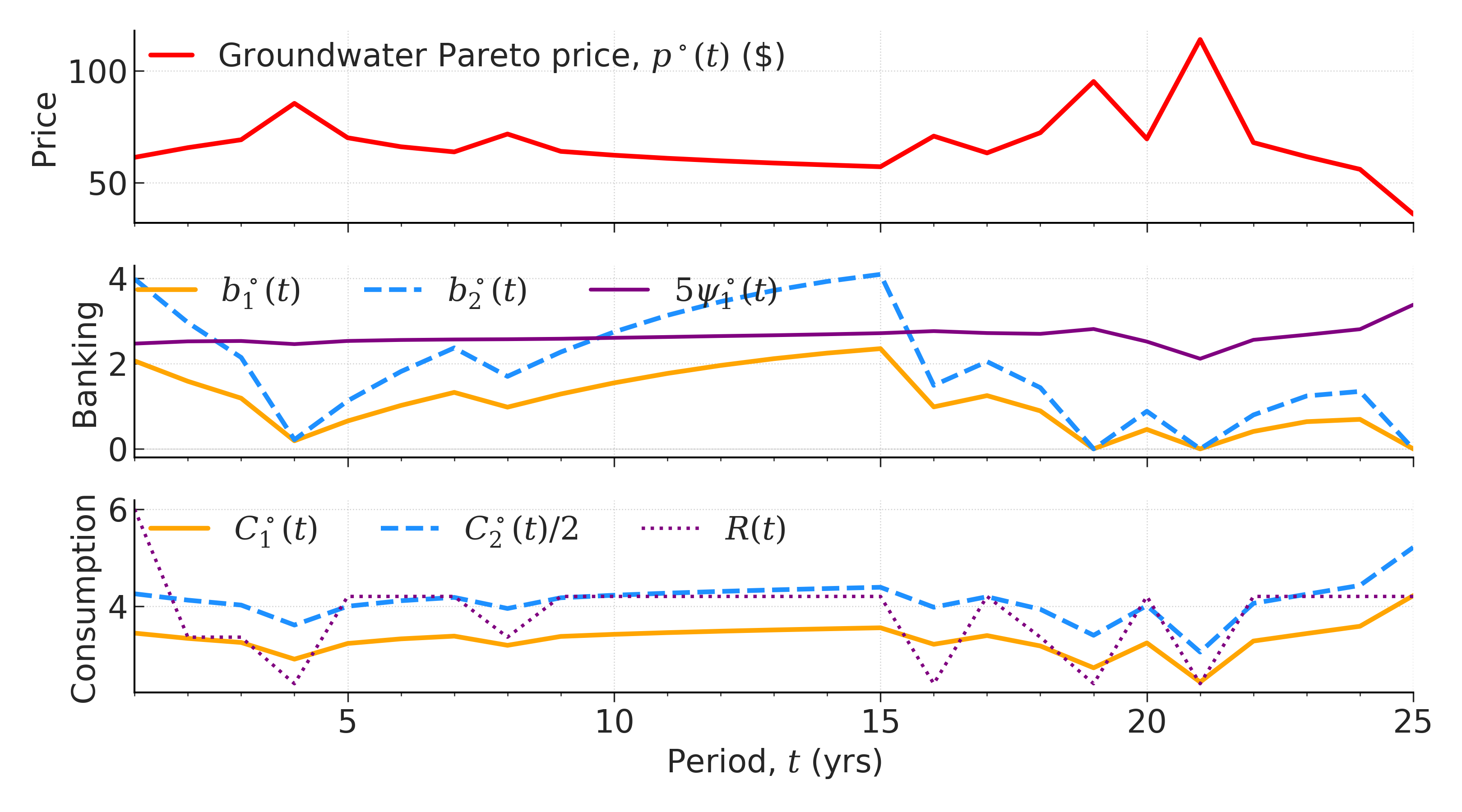}
\includegraphics[width=0.22\textwidth]{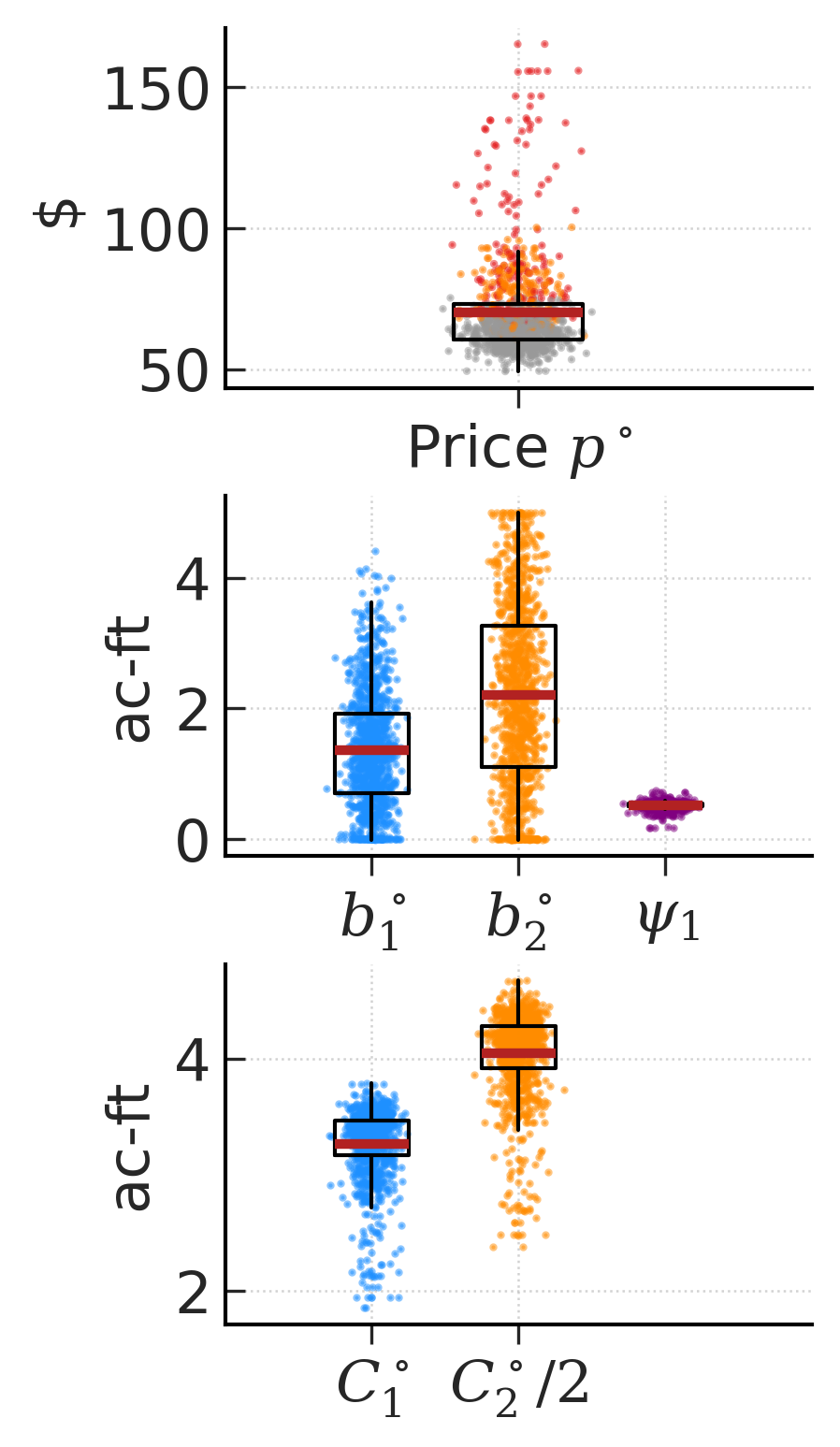}
\caption{Alfalfa vs Olive/Almond farming in the San Joaquin basin in California. All values are in ac-ft per acre of land; prices are per ac-ft and color-coded by $R(t)$ state. Because Farmer 2 has twice the acreage of Farmer 1, the panels show $C^\circ_2(t)/2$.}\label{fig:sj}
\end{figure}

Figure \ref{fig:sj} shows a typical trajectory over 25 years in the resulting San Joaquin groundwater market and the corresponding steady-state distributions of the equilibrium quantities. The obtained equilibrium is broadly similar to the case study of Section \ref{sec:statics}. Banking allows to maintain nearly constant consumption unless there are multiple consecutive or closely spaced droughts that fully deplete reserves. In San Joaquin the cap on $b^\circ_j(t) \le 5$ and the high variability of annual allocation (which vary by nearly a factor of 2) imply that reserves are at best sufficient for 2 years. They get fully depleted at $t=4$ (after 2 medium years and a drought) and again at $t=19$ (after 2 droughts and a medium year out of 4 years). 

The obtained groundwater equilibrium prices in the range $p^\circ(t) \in [50,150]$ per ac-ft are in the ballpark of likely transactions; presently California GSAs sell water for $\$10-30/$ac-ft and penalize overpumping at $\$200-500$/ac-ft which serve as lower and upper bounds for economically reasonable $p(t)$. As expected, given the relative high revenues from olives and almonds, which bring in nearly triple the cash than alfalfa and also come with lower irrigation costs, on the margin the more productive  orchards are prioritized ahead of forage. We find that 0.45-0.55 ac-ft of rights are typically traded from the alfalfa farmer to the orchard farm; the latter ends up consuming about 105\% of her nominal needs to collect extra revenue, while the lower-value alfalfa farm only uses on average 61\% of nominal needs, preferring to sell groundwater rights. In particular $C^\circ_2 > C^\circ_1$ even though both olives and almonds need nominally less water than alfalfa.

\section{Conclusion}

The presented model offers a first pass at modeling dynamic groundwater markets, serving as a foundation for further development. Multiple future directions ought to be explored and studied. To begin with, new algorithms are needed to compute (sub-game perfect) Nash equilibria in the presence of multiple agents. While our current approach can be straightforwardly extended to $N>2$ agents, it suffers from two scalability challenges. First, we do not have any guarantee on the best-response iteration approach, and hence no guarantee that an equilibrium can be approximated. Second, we directly grid out the input space $\bfw \in \cX^N$ which grows exponentially in $N$. To resolve this challenges therefore requires more sophisticated t\^{a}tonnement methods, as well as alternatives for training and fitting $\widehat{V}_j$'s. Ideas from Multi-Agent Reinforcement Learning should prove fruitful for both. In tandem with above, with multiple agents we expect that there are multiple NEs and the question of equilibrium refinement and selection looms large. This calls for theoretical investigation of how to characterize the range of feasible equilibria and criteria for selecting among them, and computational approach to find a specific desired NE.

The presented model postulates a Rational Expectations Equilibrium, meaning that all agents are fully rational and have complete knowledge of the system dynamics and current state, such as knowing the banking and consumption of all other agents. It would be worthwhile to compare to setups with bounded rationality, for example where agents only base their decisions on the aggregate pumping or other summary statistics, rather than the entire profiles $\bf{b}, \bf{w}$.

Next, in the presented paradigm, the regulator acts as the invisible hand, merely enforcing market clearing and the use of the Pareto price $p^\circ$. More generally, the regulator may have objectives of their own, linked for instance to collective equity among agents, or to environmental protection (i.e., accounting for further externalities that we have not mentioned yet). Such active groundwater management can be captured through a Leader-Follower game: the regulator as the leader dynamically optimizes her criterion, while anticipating the strategic collective response from the agents, that in turn competitively compete with each other. This would add a further layer of optimization to our model. 

Last but not least, there is much empirical work to be done, from more careful calibration of the model to realistic groundwater basins, to studying a range of alternative market mechanisms, for instance use of auctions rather than clearing prices, or the use of non-Pareto prices that requires pro-rata rationing among the unbalanced buyers and sellers. 

\subsubsection*{Acknowledgments} 
IC  acknowledges support from the National Science Foundation grant DMS-2407549. ML acknowledges support from the National Science Foundation grant DMS-2407550. We thank John Inston for the recharge calibrations in Section \ref{sec:calilbration}.

\subsubsection*{Competing Interests} 
The authors declare no competing interests. 

\subsubsection*{Data Availability}
The developed accompanying Python code used for numerical computations is available from the corresponding author on reasonable request.


\newcommand{\etalchar}[1]{$^{#1}$}

\appendix
\section{Proofs}\label{sec:apend1}

\subsection{Proof of Theorem \ref{th:main}}
Recall that\footnote{Here, to simplify notation we put $\bfpi_j = \bfpi_j(t)$, and $p=p(t)$.} $(\bfpi_{-j}, p, \bfw)$  belongs to some compact set in the Euclidean space $\bR^{2J+1}$, and $r$ is discrete. Let us first prove by backward induction that $\cV_j(\bfpi_{-j}, p; t, \bfw, r)$ is continuous in all its variables. Clearly this is true for $t=T$. Fix the time $t+1$ optimal  strategy $(\widehat \bfpi, \widehat p):=( \bfpi^\circ(t+1), p^\circ(t+1))$. By continuity of  $G_j, U_j$, the kernels \eqref{eq:trProb}, the induction hypothesis and the transition dynamics \eqref{eq:dynWj}, the function 
\begin{align*}
\cQ_j(\bfpi_{-j}, p, & \ t, \bfw, r;  C_j,\psi_j)  =  U_j \bigl( G_j(t, C_j) + \psi_j p\bigr)  \\
 &+ e^{-\delta} \bE^{\bfpi}[ \cV_j(\widehat \bfpi_{-j}, \widehat p; t+1, \bfw + \bfR(t+1) - \bfC - \boldsymbol{\psi}, R(t+1) ) \mid \bfW(t) = \bfw, R(t) = r] 
\end{align*}
is continuous in all its variables $\bfpi_{-j}, p, \bfw, r, C_j,\psi_j$. We also note that the action space $\cA_j(\bfw)$ does not depend on $\bfpi_{-j},p,r$\footnote{though the quoted result covers the case where the action space is state-dependent.}   it is upper and lower hemicontinous in $\bfw$, compact, and nonempty. Thus, by a version of Berge Maximum Theorem \cite[Theorem 17.31]{AliprantisBorderBook}, the maximizer in \eqref{eq:DPP-NE} is attained, $\cV_j(\bfpi_{-j}, p; t, \bfw, r)$ is well defined, and it is continuous in $\bfpi_{-j}, p, \bfw, r$.  Moreover, for all $\bfpi_{-j}, p, \bfw, r$, the set of maximizers, or the set of optimal best response strategies, $\cR_j(\bfpi_{-j}, p; t, \bfw, r)$,  is nonempty, compact and the corresponding multi-function 
$$
(\bfpi_{-j}, p, \bfw, r) \rightrightarrows \cR_j(\bfpi_{-j}, p; t, \bfw, r) := \argmax_{C_j,\psi_j} \cQ_j(\bfpi_{-j}, p, t, \bfw, r;  \ C_j,\psi_j), 
$$
is upper-hemicontinuous. 

By concavity of $G_j, U_j,$ monotonicity of $U_j$, for a fixed $p>0$, the function $(C_j,\psi_j) \mapsto U_j(G_j(t,C_j) + \psi_j p)$ is concave. For fixed $\widehat \bfpi_{-j}, \widehat p, t$, 
since the mapping $(C_j,\psi_j)\mapsto x' = (\bfw + \bfR(t+1) - \bfC - \boldsymbol{\psi}, R(t+1))$ is affine, and $x'\mapsto\cV_{j}(\widehat \bfpi_{-j}, \widehat p; t+1, x')$ is continuous by induction hypothesis, we get that $(C_j,\psi_j)\mapsto \cV_{j}(\widehat \bfpi_{-j}, \widehat p; t+1, x')$ is concave, and since expectation preserves concavity,  we conclude that for fixed $\bfpi_{-j}, p, t$, the objective function $\cQ$ is concave in $(C_j,\psi_j)$. Therefore, using this and since the feasible set $\cA_j(\bfw)$ is convex, it follows that $\cR_j(\bfpi_{-j}, p; t, \bfw, r)$ is also convex-valued. 

For every $p$, consider the joint best response mapping 
\[
\cR(\bfpi, p; t, \bfw, r) = \Pi_{j=1}^J \cR_{j}(\bfpi_{-j}, p ; t, \bfw, r).
\]
Since each $\mathcal R_j(\boldsymbol\pi_{-j},p;t,\mathbf w,r)$ is nonempty, compact, convex-valued and upper hemicontinuous in $\boldsymbol\pi_{-j}$, so is the joint best-response correspondence $\cR(\bfpi, p; t, \bfw, r)$ in $\bfpi$. By Kakutani fixed point theorem, there exists $\boldsymbol\pi^\circ$ such that
\begin{equation}\label{eq:pi-circ}
    \boldsymbol\pi^\circ(t, \bfw,r) \in \mathcal R(\boldsymbol\pi^\circ,p;t,\mathbf w,r).
\end{equation}

Given $\bfpi^\circ(t,\bfw, r)$, let us consider the aggregate demand correspondence 
\[
\Psi(p; \bfw, r) = \Set{\sum_{j=1}^J \psi_j \ : \ (\psi_1,\ldots,\psi_J) \in \cR(\bfpi^\circ, p; t, \bfw, r)}. 
\]
Since $\cR(\bfpi^\circ, p; t,\bfw, r)$ is compact, convex, and upper-hemicontinous, and the mapping $\sum_j x_j$ is continuous, the set-valued mapping $\Psi$ is nonempty, compact-valued, convex, and upper-hemicontinous in $p$; cf. \cite[Theorem~17.23]{AliprantisBorderBook}.   Therefore, employing the Berge maximum theorem, the minimizer exists, and the function 
\[
g(p,\bfw, r)=\min_{\psi\in\Psi(p,\bfw, r)} |\psi|
\]
is continuous in $p$. Consequently, by compactness of $[\underline{p},\overline{p}]$, there exists
\begin{equation}\label{eq:pcirc}
p^\circ(t, \bfw,r)\in \argmin_{p\in[\underline{p},\overline{p}]} g(p,\bfw, r).
\end{equation}
Next, for $p^\circ(t; \bfw, r)$, since $\cR_j(\boldsymbol\pi_{-j}^\circ,p^\circ;t,\mathbf w,r)$ is compact-valued and has a measurable graph, by \cite[Theorem~18.13]{AliprantisBorderBook}, there exists a measurable selector 
$$
(C_j^\circ(t;\bfw, r), \psi_j^\circ(t; \bfw, r)) \in \cR_j(\bfpi_{-j}^\circ,p^\circ(t; \bfw, r)),
$$
and since $\Psi(p; \bfw,r)$ is compact-valued and the function $\psi \mapsto |\psi|$ is continuous, a measurable version of Berge’s maximum theorem implies the existence of a measurable selector such that 
\[
\sum_{j=1}^J \psi_j^\circ(t,\bfw, r) \in \argmin_{\psi\in\Psi(p^\circ(\bfw, r),\bfw, r)} |\psi|.
\]

The final verification step follows at once. The strategy $(C_j^\circ(t;\bfw, r), \psi_j^\circ(t; \bfw, r))$ is optimal, and $p^\circ(t,\bfw,r)$ minimizes the market clearing conditions. Thus, this is a sub-game perfect equilibrium. Continuity of the value function $\cV_j(\bfpi_{-j}, p^\circ; t, \bfw,r)$, and $\cV_{J+1}(\bfpi; t, \bfw,r)$ follows by Berge maximum theorem.

\subsection{Proof of Theorem~\ref{th:ne-market-clearing}}

Let us denote the continuation term in $\cQ_j$, as function of $\psi_j$, as 
\begin{equation}\label{eq:Z-contTerm}
\psi_j\mapsto Z_j(t,\psi_j) = e^{-\delta} \bE^{\bfpi}\!\left[ \cV_j(\widehat{\bfpi}_{-j},\widehat p; t+1, \bfw+\bfR(t+1)-\bfC-\boldsymbol\psi, R(t+1)) \ \mid \bfW(t) = \bfw, \ R(t)=r\right]. 
\end{equation}
To prove the result, we need to study how the continuation value depends on $\psi_j$. While $\cV_j$ might not be differentiable, recall that $Z_j$ is concave, and thus its superderivatives exist on the feasible convex set.  

We fix $(t,\bfw,r)$, and if there is no confusion, we drop writing these variables, e.g., we write $\cR_j(\bfpi_{-j}, p)$ instead of $\cR_j(\bfpi_{-j}, p; t, \bfw,r)$. 
		
		Next we prove that for small enough prices $p$, for any $(C_j,\psi_j)\in \cR_j(\bfpi_{-j},p)$ we have that $\psi_j<0$. For a fixed $C_j$, and $p>0$, define
		\[
		H_j(\psi_j) = U_j(G_j(C_j) + p \psi_j) + Z_j(\psi_j),
		\] 
		where $Z_j$ is the continuation term \eqref{eq:Z-contTerm}.  By concavity of $U_j$ and $Z_j$, we have that $H_j$ is concave with superdifferential 
		\[
		\partial^+ H_j(\psi_j) = U_j'(G_j(C_j) + p\psi_j) p + \partial^+ Z_j(\psi_j). 
		\]
		This means that for any supergradient $\xi\in \partial^+ H_j(\psi_j)$, there exists $\zeta\in\partial^+ Z_j(\psi_j)$, such that 
		\[
		\xi = U_j'(G_j(C_j) + p\psi_j) p + \zeta.  
		\]
	By assumption that $U'\leq M_U$, we get	that $\xi \leq M_U p + \zeta$,  and by Lemma~\ref{lemma:supergrad1}, for any $p <\underline{p}=M_G$ we have that every supergradient $\xi <0$, and since $H_j$ is concave, we have that $H_j$ is strictly decreasing in $\psi_j$, for $p$ small enough.  Thus, its largest value is attained for the smallest $\psi_j$ that meets the constraint $-\sum_{i\neq j} w_j \leq C_j + \psi_j$, which is always negative $\psi_j <0$.  
	
	Similarly, $\xi \geq M_U p - M_UM_G$ and for any $p>\overline{p}=1/M_G$, we have that $\xi > 0$, and thus $H_j$ is increasing, with maximum attained at $W_j-C_j$. Thus, for large $p>\overline{p}$ and all $C_j < W_j\wedge \overline{c}_j$, the farmer will prefer to sell $\psi_j>0$.

	Next, for every $j$, consider the set-valued mapping 
		\[
		p\rightrightarrows \Psi_j(p):= \set{\psi_j \, :\, \exists C_j, \ (C_j,\psi_j) \ \in \cR_j(\bfpi^\circ_{-j},p; t, \bfw,r)}. 
		\]
From the above, we conclude 
		\[
		\sup \Psi_j(\underline{p}) < 0 < \inf \Psi_j(\overline{p}).
		\]
	Hence, the aggregate  demand $\Psi(p)$ satisfies 
		\[
		\sup \Psi(p^1) < 0 <  \inf \Psi(p^2).
		\]
		Consequently, there exists $p^\circ$ such that $0\in \Psi(p^\circ)$, and this is the minimizer of \eqref{eq:pcirc}.

\begin{lemma}\label{lemma:supergrad1}
Every supergradient $\zeta\in\partial ^+ Z_j(t,\psi_j)$, satisfies
\begin{equation}\label{eq:zeta3}
- M_U M_G e^{-\delta}\le \zeta \le 0. 
\end{equation}
\end{lemma}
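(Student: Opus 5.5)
The plan is to bound the one-sided derivatives of $Z_j$ in $\psi_j$ directly from the structure of the continuation value. Increasing $\psi_j$ (selling more) lowers next-period holdings $W_j(t+1)=w_j+R_j(t+1)-C_j-\psi_j$ one-for-one and leaves the other coordinates unchanged. So $Z_j(t,\cdot)$ is $e^{-\delta}$ times the expectation of $x\mapsto \cV_j(\widehat\bfpi_{-j},\widehat p;t+1,\ldots)$ evaluated at $x=W_j(t+1)$, with the argument decreasing in $\psi_j$. Concavity of $Z_j$ was already noted in the proof of Theorem~\ref{th:main}, so supergradients exist. It then suffices to show that $x\mapsto \cV_j(\cdot;t+1,\ldots,x,\ldots)$ is nondecreasing in agent $j$'s own holding, and Lipschitz there with constant $M_UM_G$. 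These two facts give, respectively, $\zeta\le 0$ and $\zeta\ge -M_UM_GM e^{-\delta}$ after the sign flip and the factor $e^{-\delta}$, and expectation preserves both bounds.

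\emph{Monotonicity.} I would argue by a feasibility inclusion. If agent $j$ holds more water at $t+1$, every feasible action $(C_j,\psi_j)$ with the smaller holding remains feasible, since the constraint $C_j+\psi_j\le W_j$ only loosens. The extra water can simply be banked (or, at $T$, it enters $U_j(G_j(T,w_j\wedge\overline c_j))$, which is nondecreasing). Hence the supremum in \eqref{eq:DPP-NE} is nondecreasing in $w_j$, and backward induction propagates this to all $t$.

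\emph{Lipschitz bound.} This is also by backward induction on $t$. At $T$, $U_j\circ G_j(T,\cdot\wedge\overline c_j)$ has slope at most $M_UM_G$ by the chain rule and the assumed derivative bounds. For the inductive step, given an extra amount $\varepsilon$ at time $t$, any strategy optimal with holding $w_j+\varepsilon$ can be compared with the modified strategy for holding $w_j$ that absorbs the deficit $\varepsilon$. Two ways are possible: by reducing $C_j$ by $\varepsilon$ today, which costs at most $M_UM_G\varepsilon$ in current utility; or by reducing banking and passing the deficit forward, which costs at most $e^{-\delta}M_UM_G\varepsilon\le M_UM_G\varepsilon$ by the induction hypothesis. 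Either way $\cV_j(w_j+\varepsilon)-\cV_j(w_j)\le M_UM_G\varepsilon$.

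The main obstacle is this last step, for two reasons. Reducing $C_j$ may violate $C_j\ge\underline c_j$, so the deficit must instead be absorbed by banking or by selling less ($\psi_j$ lowered). Lowering $\psi_j$ costs $pU'\varepsilon$, and this is not obviously bounded by $M_UM_G$ unless prices stay in the range where $p\le M_G$. I would handle this by choosing, among the three adjustments, the one that is always available and cheapest, and by invoking the price range $[\underline p,\overline p]$. If needed, I would restrict the bound to the region of $\psi_j$ relevant to the proof of Theorem~\ref{th:ne-market-clearing}. I would also check that the other agents' equilibrium responses $\widehat\bfpi_{-j},\widehat p$ are held fixed in the definition of $\cV_j$, so that no strategic feedback enters the derivative.
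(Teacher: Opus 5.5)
\textbf{There is a genuine gap at the step you flag, and your proposed remedy does not close it.}

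Your reduction is the same as the paper's. You bound the slope of the time-$(t+1)$ value in agent $j$'s own holding between $0$ and $M_UM_G$ by backward induction, using the chain rule at $T$. You then pass to $Z_j$ via the slope $-1$ and the factor $e^{-\delta}$. Your monotonicity argument is fine.

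\emph{Where it breaks.} Suppose the optimizer at $w_j+\varepsilon$ has $b_j^*<\varepsilon$ and $C_j^*=\underline c_j$. Then the deficit must come out of $\psi_j$, at a cost of up to $M_Up\,\varepsilon$. Invoking the price range only gives $M_U\bar p\,\varepsilon$. This is at most $M_UM_G\varepsilon$ only if $\bar p\le M_G$, and nothing in the hypotheses says so; the paper's $\bar p$ is meant to be a level at which every agent prefers to sell. Choosing among your three adjustments cannot fix this.

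\emph{The stated bound is actually false for prices above $M_G$.} Take $U_j(x)=x$, so $M_U=1$, and a frozen price $p\in(M_G,\bar p]$ at period $T-1$. Since $G_j'(T-1,\cdot)\le M_G<p$ and the marginal value of banking is at most $e^{-\delta}M_G<p$, the optimizer is $C_j=\underline c_j$, $b_j=0$, $\psi_j=w_j-\underline c_j$. Hence $\cV_j(\widehat\bfpi_{-j},p;T-1,\bfw,r)$ has slope $p$ in $w_j$. So $Z_j(T-2,\cdot)$ has slope $-e^{-\delta}p<-e^{-\delta}M_UM_G$ at every interior $\psi_j$. Restricting the range of $\psi_j$, your other fallback, therefore does not help either.

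\emph{Comparison with the paper.} The paper's proof avoids this case only because its inductive step freezes $(C_j,\psi_j)$. It notes that $w_j$ then enters $\cQ_j$ only through the continuation, with coefficient $1$, and compares $\cV_j(\bfw_2)-\cV_j(\bfw_1)$ through a single optimizer. That comparison needs the optimizer at the larger holding to stay feasible at the smaller one; with the $\bfw_1$-optimizer, as written, the inequality actually runs the other way. In effect it silently assumes every deficit is absorbed by banking, which is exactly your case split. Your objection therefore applies to the paper's argument as well.

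\emph{What your approach does prove.} Absorb $\eta=(\varepsilon-b_j^*)^+$ entirely through $\psi_j$ and the rest through banking. This is always feasible under \eqref{eq:waterBudgetConstr}, since the new $C_j+\psi_j$ equals $w_j\ge 0$ when $\eta>0$. It gives $\cV_j(w_j+\varepsilon)-\cV_j(w_j)\le M_Up\,\eta+e^{-\delta}K(\varepsilon-\eta)$, with $K$ the inductive slope bound. The induction then closes with $K=M_U\max(M_G,\bar p)$. It closes with the stated $K=M_UM_G$ whenever all prices involved are at most $M_G$. You never need to adjust $C_j$.

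\emph{Frozen price.} Like the paper, this treats $\widehat p$ and $\widehat\bfpi_{-j}$ as frozen. Your last sentence is right that this is an assumption, not a consequence. If $\widehat p=p^\circ(t+1,\cdot)$ moves with the next-period state, neither argument controls the resulting price-feedback term.
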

\begin{proof} 
The proof follows by standard arguments and a backward induction procedure. For the sake of completeness, we present it here. 

Each induction step consist of two substeps. First, prove that  
\[
w_j\mapsto \cV_j ({\bfpi}_{-j}, p; t, \bfw, r )
\]
is concave (already proved) and any supergradient $ \xi\in \partial_w^+\cV_j({\bfpi}_{-j}, p; t, \bfw, r )$ satisfies
\begin{equation}\label{eq:xi4}
0\leq \xi \leq M_U M_G.
\end{equation}
Second, prove that the map $\psi_j \mapsto Z_j(t,\psi_j)$ is concave and every supergradient $\zeta\in \partial_{\psi_j}^+ Z_j(t,\psi_j)$ satisfies \eqref{eq:zeta3}. 

For $t=T$, the mapping $w_j\mapsto \cV_j ({\bfpi}_{-j}, p; T, \bfw, r ) = U_j(G_j(T,w\wedge \overline{c}_j))$
is concave (since both $U_j, G_j(t,\cdot)$ are concave in increasing), and by chain rule 
$\partial_{w_j} \cV_j = U_j'(G_j(T, w_j))\partial_C G_j(T,w_j)$. Thus,  
\[
0\leq \partial_{w_j}\cV_j \leq M_U M_G, 
\]
and \eqref{eq:xi4} is satisfied. Note that $Z_j(T,\psi_j):=0$, and \eqref{eq:zeta3} holds true. 

Assume the assertion holds true for $t+1$, that is $\bar w_j \mapsto \cV_j(\widehat{\bfpi}_{-j}, \widehat{p}; t+1, \bar \bfw, R(t+1))$  is concave and every its supergradient in $\bar w_j$ belongs to $[0, M_U M_G]$. Take the affine mapping  $\bar \bfw= \bfw + \bfR(t+1) - \bfC - \boldsymbol{\psi}$. Then, since composition with affine transformation preserves concavity, we have that 
\[
(C_j,\psi_j,w_j) \mapsto \cV_j(\widehat{\bfpi}_{-j}, \widehat{p}; t+1, \bfw + \bfR(t+1) - \bfC - \boldsymbol{\psi} )
\]
is jointly concave and its every supergradient in $w_j$ belongs to $[0, M_U M_G]$. 
Since the conditional expectation preserves concavity 
\[
(C_j,\psi_j,w_j)  \mapsto e^{-\delta} \bE^{\bfpi}_t[\cV_j(\widehat{\bfpi}_{-j}, \widehat{p}; t+1, \bfw + \bfR(t+1) - \bfC - \boldsymbol{\psi} )]
\]
is also jointly concave. 
Consequently, the function 
\[
(C_j,\psi_j,w_j)  \mapsto \cQ_j(\bfw, C_j, \psi_j) =  U_j(G_j(t,C_j) + p\psi_j) + e^{-\delta} \bE^{\bfpi}_t[ \cV_j(\widehat{\bfpi}_{-j}, \widehat{p}; t+1, \bfw + \bfR(t+1) - \bfC - \boldsymbol{\psi} )]  
\]
is also jointly concave. Since the first term  $U_j(G_j(t,C_j) + p\psi_j)$ does not depend on $w_j$, and the second is affine in $w_j$ with coefficient 1, every suppergradient of  $w_j\mapsto \cQ_j(\bfw, C_j, \psi_j)$ belongs to $[0, M_UM_G]$. 

Since $\cQ_j$ is jointly concave in $(w_j, C_j, \psi_j)$ and the set $(C_j,\psi_j,\cA_j(\bfw)$ is convex,  in view of \cite[Section 3.2.5.]{BoydVandenberghe2004Book} $w_j \mapsto \cV_j(\bfpi_{-j}, p; t, \bfw, r)$ is concave. 
Let $\bfw_1,\bfw_2$ fixed and $(C_j^*,\psi_j^*)\in \cA_j(\bfw_1)$ be the optimizer of $\cV_j(\bfpi_{-j}, p; t, \bfw_1,r)$. Then 
\[
\cV_j(\bfw_2,r) - \cV_j(\bfw_1,r) \leq Q_j(\bfw_2, C_j^*, \psi_j^*) - \cQ_j(\bfw_1, C_j^*, \psi_j^*). 
\]
Since supergradients in $w_j$ of $Q_j$ are bounded by $M_UM_G$, we have 
\[
Q_j(\bfw_2, C_j^*, \psi_j^*) - \cQ_j(\bfw_1, C_j^*, \psi_j^*) \leq M_UM_G (w_{j,2} - w_{j,1}),
\] 
which combined with the previous inequality, we conclude that any $\xi\in\partial_{w_j}^+\cV_j(\bfw,r)$ is bounded from above by $M_UM_G$. Non-negativity of $\xi$ follows from the monotonicity of $\cV_j(\bfw,r)$ in $w_j$. Thus, for any $\xi\in\partial_{w_1}^+ \cV_j(\bfw,r)$, we have \eqref{eq:xi4}. 

Noting that $\bar w_j = w_j + R(t)- C_j -\psi_j$ is affine in $\psi_j$, with slope -1, and that the function  
\[
Z_j(t-1,\psi_j) =  e^{-\delta} \bE^{\bfpi}_{t-1} \left[ \cV_j(\widehat{\bfpi}_{-j},\widehat p; t, \bfw+\bfR(t)-\bfC-\boldsymbol\psi, R(t)) \right].  
\]
is concave, and since the supergradient is computed as 
\[
\zeta = e^{-\delta} \eta, \quad \eta\in\partial_{w_j}^+ \cV_j(\widehat{\bfpi}_{-j},\widehat p; t, \bfw), 
\]
we deduce that any supergradient of $\zeta\in\partial_{\psi}^+Z_j(t-1,\psi_j)$ satisfies \eqref{eq:zeta3}.

\end{proof}

\end{document}